\newtheorem{theorem}{Theorem}[section]
\newtheorem{lemma}[theorem]{Lemma}
\newtheorem{proposition}[theorem]{Proposition}
\newtheorem*{problem}{Problem}
\theoremstyle{definition}
\newtheorem{definition}[theorem]{Definition}
\newtheorem{remark}[theorem]{Remark}
\newtheorem{assumption}[theorem]{Assumption}
\newtheorem{example}[theorem]{Example}
\renewcommand{\leq}{\leqslant}
\newcommand{\K}{\ensuremath{\mathbb{K}}}
\newcommand{\F}{\ensuremath{\mathbb{F}}}
\newcommand{\Fq}{\ensuremath{\mathbb{F}_q}}
\newcommand{\Fqm}{\ensuremath{\mathbb{F}_{q^m}}}
\newcommand{\mat}[1]{\ensuremath{\boldsymbol{#1}}}
\newcommand{\code}[1]{\ensuremath{\mathscr{#1}}}
\newcommand{\AC}{\code{A}}
\newcommand{\CC}{\code{C}}
\newcommand{\DC}{\code{D}}
\newcommand{\VC}{\code{V}}
\newcommand{\Am}{\mat{A}}
\newcommand{\Bm}{\mat{B}}
\newcommand{\Gm}{\mat{G}}
\newcommand{\Hm}{\mat{H}}
\renewcommand{\Im}{\mat{I}}
\newcommand{\Mm}{\mat{M}} 
\newcommand{\Pm}{\mat{P}}
\newcommand{\Sm}{\mat{S}}
\newcommand{\Vm}{\mat{V}}
\newcommand{\zerov}{\mat{0}}
\newcommand{\av}{\mat{a}}
\newcommand{\bv}{\mat{b}}
\newcommand{\cv}{\mat{c}}
\newcommand{\dv}{\mat{d}}
\newcommand{\mv}{\mat{m}}
\newcommand{\vv}{\mat{v}}
\newcommand{\wv}{\mat{w}}
\newcommand{\xv}{{\mat{x}}}
\newcommand{\yv}{{\mat{y}}}
\newcommand{\zv}{{\mat{z}}}
\newcommand{\cA}{\mathcal{A}}
\newcommand{\cB}{\mathcal{B}}
\newcommand{\cI}{\mathcal{I}}
\newcommand{\cP}{\mathcal{P}}
\newcommand{\cV}{\mathcal{V}}
\newcommand{\GRS}[3]{\text{\bf GRS}_{#1}(#2,#3)}
\newcommand{\Alt}[3]{\code{A}_{#1}(#2, #3)}
\newcommand{\Goppa}[2]{\code{G}(#1, #2)}
\newcommand{\trsp}[1]{{{#1}^{\intercal}}}
\newcommand{\Crel}{\code{C}_\text{rel}}
\newcommand{\Cmat}{\code{C}_\text{mat}}
\newcommand{\starp}[2]{{#1} \star {#2}}
\newcommand{\sq}[1]{#1^{\star 2}}
\newcommand{\sqb}[1]{\left(#1\right)^{\star 2}}
\newcommand{\GL}{\mathbf{GL}}
\newcommand{\rank}{\mathbf{Rank}}
\newcommand{\eqdef}{\stackrel{\text{def}}{=}}
\newcommand{\ie}{i.e. }
\newcommand{\Span}[2]{\left\langle \, #1 \, \right\rangle_{#2}}
\newcommand{\Fspan}[1]{\left\langle \, #1 \, \right\rangle_{\F}}
\newcommand{\floor}[1]{\left\lfloor #1 \right\rfloor}
\newcommand{\ceil}[1]{\left\lceil #1 \right\rceil}
\newcommand{\Iintv}[2]{\llbracket #1 , #2 \rrbracket}
\newcommand{\card}[1]{\lvert #1 \rvert}
\newcommand{\pf}{pf}
\newcommand{\Sym}{\mathbf{Sym}}
\newcommand{\Skew}{\mathbf{Skew}}
\newcommand{\ext}[2]{#1_{#2}}
\title[On the matrix code of quadratic relationships for a Goppa code]
{On the matrix code of quadratic relationships for a Goppa code} 
\author[Rocco Mora]{}
\begin{document}
\maketitle

\centerline{\scshape
Rocco Mora}

\medskip

{\footnotesize
 \centerline{CISPA – Helmholtz Center for Information Security, Germany}
} 

\bigskip


\begin{abstract}
In this article, we continue the analysis started in \cite{CMT23} for the matrix code of quadratic relationships associated with a Goppa code. We provide new sparse and low-rank elements in the matrix code and categorize them according to their shape.
Thanks to this description, we prove that the set of rank 2 matrices in the matrix codes associated with square-free binary Goppa codes, \ie those used in the Classic McEiece cryptosystem, is much larger than what is expected, at least in the case where the Goppa polynomial degree is 2.
We build upon the algebraic determinantal modeling introduced in \cite{CMT23} to derive a structural attack on these instances.
Our method can break in just a few seconds some recent challenges to recover the key of the McEliece cryptosystem, consistently reducing their estimated security level.
We also provide a general method, valid for any Goppa polynomial degree, to transform a generic pair of support and multiplier into a pair of support and Goppa polynomial.
\end{abstract}


\section{Introduction}

\subsection*{The McEliece scheme and its cryptanalysis}

The McEliece cryptosystem \cite{M78} is the oldest code-based encryption scheme, dating back to 1978, \ie just a few months after the ubiquitously used RSA cryptosystem \cite{RSA78}. Contrarily to the latter \cite{S94a}, the former is also widely believed to be a quantum-resistant alternative, meaning that quantum algorithms are not expected to break it exponentially faster than classical ones. This is mirrored in the NIST Post-Quantum Standardization Process, where the IND-CCA secure version Classic McEliece \cite{ABCCGLMMMNPPPSSSTW20} is currently one of the few candidates in the fourth round. Despite the public key size being huge, the McEliece encryption scheme benefits from extremely fast encryption and decryption algorithms and very small ciphertexts. This potentially makes it an attractive option for several use-cases\footnote{\url{https://groups.google.com/a/list.nist.gov/g/pqc-forum/c/jgevyeKehcM}}. 

The other important argument in favor of McEliece is that all the known general decoding algorithms for message recovery developed in over 60 years of research, both classical or quantum, barely improved the exponent of the exponential cost \cite{P62,S88,D89,CC98,MMT11,BJMM12,MO15,BM17,B10,KT17a}. However, these are still used to design secure parameters, because key recovery attacks are immensely more expensive than message recovery techniques. 

Key recovery attacks try to exploit the algebraic structure of the underlying family of codes. Indeed, in order to decrypt a message, the receiver must be able to decode a codeword and therefore a code equipped with an efficient decoding algorithm must be adopted. The original proposal of McEliece, as well as Classic McEliece, builds upon the class of binary Goppa codes. An element of this family is uniquely determined by a vector, called support, of length equal to the code length and an irreducible polynomial of a relatively small degree defined over an extension of the binary field which is called Goppa polynomial. On the other hand, a Goppa code corresponds to several pairs of supports and Goppa polynomials. Recovering any of them allows to decode efficiently any message. 

For a long time, the only key recovery attack consisted of guessing a valid pair of support and Goppa polynomial \cite{LS01} and then checking via the Support Splitting Algorithm \cite{S00} whether it defines a Goppa code that is permutation equivalent to the public one. The total complexity is exponential and, as already mentioned, the exponent is much bigger than the one for message recovery approaches. 

Even the potentially easier task of distinguishing efficiently if a generator matrix comes from a Goppa code or a random one (this takes the name of the Goppa distinguishing problem) had been considered difficult for a long time. This is because Goppa codes share a lot of properties in common with random ones. For instance, they asymptotically meet the Gilbert-Varshamov bound, they have approximately the same weight distribution and also a trivial permutation group. Based on the two assumptions: (1) the pseudorandomness of Goppa codes and (2) the hardness of decoding up to the binary Goppa bound, it is possible to devise a reductional proof of security for the McEliece scheme \cite{S10}. 

An important step in the understanding of the structure of a Goppa code is the high-rate distinguisher presented in \cite{FGOPT11}. Here it was shown that a linear system associated with Goppa codes has an unusually small rank when the code rate is high enough. This is not the case for Classic McEliece, but it impacts other schemes like the CFS digital signature \cite{CFS01}. Much later, a different perspective about the distinguisher was given in \cite{MT23}, exploiting the link between the linear system and square codes revealed in \cite{MP12}. More precisely, the distinguisher was explained in terms of the dimension of the \textit{square code} of the dual of a Goppa code. Tight upper bounds for this dimension have been provided in \cite{MT23}, thus making the distinguisher more rigorous.
The square code analysis given in \cite{MT23} resulted in the first-ever polynomial-time cryptanalysis on unstructured alternant codes with high rate \cite{BMT23}. 
This approach is limited to the high-rate regime and the problem of attacking, or even simply distinguishing, a Goppa code with rate comparable to those used in Classic McEliece was left open. This question has been partially addressed in \cite{CMT23}, where a completely new approach based on quadratic forms to attack the McEliece cryptosystems has been introduced.

\subsection*{Contributions of this work and organization of the paper}

The contribution of the paper is twofold. In a nutshell:
\begin{enumerate}
	\item We illustrate a new efficient key-recovery attack on binary Goppa codes with a Goppa polynomial degree equal to 2. The attack can be split into two parts. First, we perform an algebraic cryptanalysis on the Pfaffian system introduced in \cite{CMT23}, thus finding low-rank elements in the matrix code of quadratic relationships. Then we exploit the knowledge of such matrices to reconstruct the secret key of the Goppa code, i.e. a valid pair of support and Goppa polynomial. As we will explain, this attack is tailored specifically for Goppa codes and does not affect generic alternant codes of order 2. 
	\item The previous result is made possible by an in-depth analysis of structured elements lying within the matrix code of quadratic relationships originated by an alternant or a Goppa code. Indeed, we prove that, when the Goppa polynomial degree is 2, the variety associated with the Pfaffian system is big. This fact is instrumental for the algebraic attack as it shows that several variables can be specialized. Our investigation, however, is more general than the attack, as it covers any alternant or Goppa code degree.
\end{enumerate}

The focus on binary Goppa codes with a Goppa polynomial degree equal to 2 of the first contribution has been partially motivated by some recent key-recovery challenges for the McEliece scheme\footnote{\url{https://www.herox.com/TIIMcElieceChallenges}, track 1B.}, that we will call ``TII challenges'' from now on. We have been able to break all TII challenges using a Goppa polynomial of degree $r=2$ and length $n> 3rm-3$, where $m$ is the field extension degree, within a few seconds, as it can be consulted at \url{https://www.herox.com/TIIMcElieceChallenges/leaderboard}. Notably, one of them has claimed bit complexity $\lambda=68$. 

After recalling some preliminary notions and results in Section 2, we introduce in Section 3 the ``Goppa code representations'' of a Goppa code, which is useful for recovering the Goppa polynomial and corresponding support. In Section 4, we categorize new matrices in the matrix code of quadratic relations and use them to study the variety associated with the Pfaffian ideal in the special case of $r=2$. This analysis led us in Section 5 to mount a polynomial-time attack on square-free binary Goppa codes of degree 2. 

\section{Preliminaries}

\subsection{Notation}

\subsubsection*{General notation.}
The closed integer interval between $a$ and $b$ is denoted with $\Iintv{a}{b}$.

\subsubsection*{Finite fields.}
We denote by $\K$ a generic field and by $\overline{\K}$ its algebraic closure. Instead, $\F$ stands for a generic finite field and $\Fq$ for the finite field of size $q$, where $q$ is a prime power. We will often consider the finite field extension $\Fqm/\Fq$, where $\Fqm$ is the finite fields with $q^m$ elements, for some positive integer $m$. We denote by $\F^*$ the set of nonzero elements of the finite field $\F$, which form a multiplicative group.

\subsubsection*{Vectors and matrices.}
Vectors are indicated by lowercase bold letters $\xv$ and matrices by uppercase bold letters $\Mm$. By convention, vector coordinates are indexed starting from 1, but we will write matrix blocks using indices that start from 0. We denote the component-wise image with respect to a vector $\xv=(x_i)_{1\le i \le n} \in \F$ of a function $f$ with domain $\F$ by using the expression $f(\xv)$, i.e. $f(\xv)=(f(x_i))_{1\le i \le n}$. In a similar manner, given $\xv,\yv\in\F^n$ and two positive integers $a,b$, we denote by $\xv^a\yv^b$ the vector $(x_i^a y_i^b)_{1 \leq i \leq n}$. Given a matrix $\Mm=(m_{i,j})\in \Fqm^{m\times n}$, we write $\Mm^{(q)}=(m_{i,j}^q)$, \ie the matrix where the Frobenius automorphism $a\mapsto a^q$ has been applied to all the entries. The set of $k \times k$ symmetric matrices over $\F$ is denoted by $\Sym(k,\F)$, whereas the corresponding set of skew-symmetric matrices is denoted by $\Skew(k,\F)$. 

\subsubsection*{Vector spaces.}
The $\K$-linear space generated by the not necessarily linearly independent vectors $\xv_1,\dots,\xv_m \in \K^n$ is denoted by $\Span{\xv_1,\dots,\xv_m}{\K}$. If $\K=\F$ then $\CC=\Span{\xv_1,\dots,\xv_m}{\F}$ is an $[n,k]$-linear code, where $k$ denotes the dimension of $\CC$.

\subsubsection*{Polynomial ideals.}
Polynomial ideals are indicated by calligraphic capital letters. Given the multivariate polynomials $f_1,\dots,f_m\in\K[x_1,\dots,x_n]$, we denote by $\langle f_1,\dots,f_m\rangle$  the polynomial ideal generated by them. The variety associated with a polynomial ideal $\cI\subseteq \K[x_1,\dots,x_n]$ is $\Vm(\cI)=\{\av \in \overline{\K}^n \mid \forall f \in \cI,\; f(\av)=0\}$.

\subsection{GRS and Goppa codes}
We first recall the definition of GRS codes, a family of evaluation codes.
\begin{definition}[Generalized Reed-Solomon (GRS) code ]\label{def:GRS}
	Let $\xv=(x_1,\dots,x_n)\in\F^n$ be a vector of pairwise distinct entries and $\yv=(y_1,\dots,y_n)\in\F^n$ a vector of nonzero entries. The \textit{generalized Reed-Solomon (GRS) code} over $\F$ of dimension $k$ with \textit{support} $\xv$ and \textit{multiplier} $\yv$ is
	\[
	\GRS{k}{\xv}{\yv}\eqdef\{(y_1 P(x_1),\dots,y_n P(x_n)) \mid P \in \F[z], \deg P < k\}.
	\]
\end{definition}
An alternant code is defined as the subfield subcode of a GRS code. Here we exploit the following proposition to define the former as the subfield subcode of the dual of a GRS code.
\begin{proposition} \cite[Theorem~4, p.~304]{MS86}\label{pr:dual_GRS} 
	Let $\GRS{r}{\xv}{\yv}$ be a GRS code of length $n$. Its dual is also a GRS code. In particular
	$
	\GRS{r}{\xv}{\yv}^\perp=\GRS{n-r}{\xv}{\yv^\perp},
	$
	with $
	\yv^\perp\eqdef\left(\frac{1}{\pi'_\xv(x_1)y_1},\dots,\frac{1}{\pi'_\xv(x_n)y_n}\right)$, where $\pi_\xv(z)\eqdef \prod_{i=1}^n (z-x_i)$ and $\pi'_\xv$ is its derivative.
\end{proposition}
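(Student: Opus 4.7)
The plan is to combine a dimension count with a direct orthogonality calculation. First I would observe that $\GRS{n-r}{\xv}{\yv^\perp}$ has dimension $n-r$ by definition (noting that $\yv^\perp$ is a valid multiplier since the $x_i$'s are pairwise distinct and the $y_i$'s are nonzero), which matches the dimension of $\GRS{r}{\xv}{\yv}^\perp$ since $\dim\GRS{r}{\xv}{\yv}=r$. Hence it suffices to prove the single inclusion $\GRS{n-r}{\xv}{\yv^\perp}\subseteq\GRS{r}{\xv}{\yv}^\perp$.

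Next I would pick arbitrary $P,Q\in\F[z]$ with $\deg P<r$ and $\deg Q<n-r$ and compute the standard inner product of the associated codewords:
\[
\sum_{i=1}^n y_i P(x_i)\cdot \frac{Q(x_i)}{\pi'_\xv(x_i)\,y_i}
= \sum_{i=1}^n \frac{P(x_i)Q(x_i)}{\pi'_\xv(x_i)}.
\]
The task therefore reduces to showing that $\sum_{i=1}^n R(x_i)/\pi'_\xv(x_i)$ vanishes whenever $R\in\F[z]$ has $\deg R\le n-2$.

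The key step is to invoke the Lagrange interpolation identity: since the $x_i$'s are pairwise distinct, every polynomial $R\in\F[z]$ of degree at most $n-1$ admits the expansion
\[
R(z)=\sum_{i=1}^n R(x_i)\,\frac{\pi_\xv(z)}{\pi'_\xv(x_i)(z-x_i)}.
\]
The right-hand side is a polynomial in $z$ of degree at most $n-1$ whose coefficient of $z^{n-1}$ equals exactly $\sum_{i=1}^n R(x_i)/\pi'_\xv(x_i)$. If $\deg R\le n-2$, the left-hand side has degree at most $n-2$, so that leading coefficient must be zero, giving the sought identity. Applying this with $R=PQ$, which has degree at most $(r-1)+(n-r-1)=n-2$, concludes the orthogonality check.

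I do not expect any serious obstacle: the statement is classical. The only conceptual point worth flagging is that over a finite field one cannot appeal to limits ``at infinity'' the way one would over $\mathbb{C}$; comparing the coefficient of $z^{n-1}$ in the Lagrange identity above is a purely algebraic substitute that works uniformly over any field.
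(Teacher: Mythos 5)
Your proof is correct, and there is nothing to compare it against inside the paper: the proposition is stated there without proof, as a direct citation of \cite[Theorem~4, p.~304]{MS86}. Your argument --- the dimension count (valid because $\pi'_\xv(x_i)=\prod_{j\ne i}(x_i-x_j)\ne 0$ makes $\yv^\perp$ a legitimate multiplier), reduction to the vanishing of $\sum_{i=1}^n R(x_i)/\pi'_\xv(x_i)$ for $\deg R\le n-2$, and the extraction of that sum as the coefficient of $z^{n-1}$ in the Lagrange interpolation identity --- is precisely the classical textbook proof, and your coefficient-comparison step is a sound, field-independent substitute for analytic arguments.
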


\begin{definition}[Alternant code]
	Let $n\le q^m$, for some positive integer $m$. Let $\GRS{r}{\xv}{\yv}$ be the GRS code over $\Fqm$ of dimension $r$ with support $\xv \in \Fqm^n$ and multiplier $\yv\in (\Fqm^*)^n$. The \textit{alternant code} with support $\xv$, multiplier $\yv$ and \textit{degree} $r$ over $\Fq$ is
	\[
	\Alt{r}{\xv}{\yv}\eqdef \GRS{r}{\xv}{\yv}^\perp \cap \F_q^n=\GRS{n-r}{\xv}{\yv^\perp} \cap \Fq^n.
	\]
	The integer $m$ is called \textit{extension degree} of the alternant code.
\end{definition}
A Goppa code is an alternant code where the support and multiplier are linked by a very particular relation.
\begin{definition}[Goppa code]
	Let $\xv\in\Fqm^n$ be a support vector and $\Gamma\in\Fqm[z]$ a polynomial of degree $r$ such that $\Gamma(x_i)\neq 0$ for all $i \in \{1,\dots,n\}$. The \textit{Goppa code} of degree $r$ with support $\xv$ and \textit{Goppa polynomial} $\Gamma$ is defined as
	$
	\Goppa{\xv}{\Gamma}\eqdef\Alt{r}{\xv}{\yv},$
	where $\yv\eqdef\left(\frac{1}{\Gamma(x_1)},\dots,\frac{1}{\Gamma(x_n)}\right).$
\end{definition}

\begin{theorem}\label{thm: binary_Goppa->Alt} \cite{P75}
	Let $\Goppa{\xv}{\Gamma}$ be a binary Goppa code with a square-free Goppa polynomial $\Gamma$ of degree $r$. Then
	\[\Goppa{\xv}{\Gamma}=\Goppa{\xv}{\Gamma^2}=\Alt{2r}{\xv}{\yv},\]
	where $y_i\eqdef\frac{1}{\Gamma(x_i)^2}$ for all $1\le i \le n$.
\end{theorem}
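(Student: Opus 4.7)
The plan is to prove the two equalities separately. The identity $\Goppa{\xv}{\Gamma^2}=\Alt{2r}{\xv}{\yv}$ is immediate from the definition of a Goppa code: the polynomial $\Gamma^2$ has degree $2r$, and the associated multiplier is $(1/\Gamma^2(x_i))_i=(1/\Gamma(x_i)^2)_i=\yv$. For $\Goppa{\xv}{\Gamma}=\Goppa{\xv}{\Gamma^2}$, the inclusion $\Goppa{\xv}{\Gamma^2}\subseteq\Goppa{\xv}{\Gamma}$ is general monotonicity, since the parity-check condition imposed by $\Gamma^2$ is stronger than the one imposed by $\Gamma$.

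The substantive step is the reverse inclusion, where both the binary alphabet and the square-free hypothesis enter. I would start from the classical syndrome characterization. Associate with $\cv\in\F_2^n$ the rational function
\[
\sigma(z)\eqdef\sum_{i=1}^n \frac{c_i}{z-x_i}=\frac{N(z)}{D(z)},\qquad D(z)\eqdef\prod_{i=1}^n(z-x_i),
\]
with $N\in\Fqm[z]$. Unfolding the orthogonality condition provided by Proposition~\ref{pr:dual_GRS}, and using the partial-fraction identity $(z-x_i)^{-1}\equiv -(\Gamma(z)-\Gamma(x_i))/((z-x_i)\Gamma(x_i))\pmod{\Gamma(z)}$, one checks that $\cv\in\Goppa{\xv}{\Gamma}$ is equivalent to $\sigma(z)\equiv 0\pmod{\Gamma(z)}$. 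Since $\gcd(\Gamma,D)=1$ (because $\Gamma(x_i)\neq 0$ for all $i$), this is in turn equivalent to $\Gamma\mid N$, and the same argument with $\Gamma^2$ in place of $\Gamma$ shows that $\cv\in\Goppa{\xv}{\Gamma^2}$ iff $\Gamma^2\mid N$. The goal is therefore to upgrade $\Gamma\mid N$ to $\Gamma^2\mid N$.

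The key trick, due to Patterson, is the characteristic-$2$ identity $\sigma(z)^2=\sigma'(z)$. Indeed, by Frobenius together with $c_i^2=c_i$ for $c_i\in\F_2$ and the fact that $-1=1$,
\[
\sigma(z)^2=\sum_i \frac{c_i^2}{(z-x_i)^2}=\sum_i \frac{c_i}{(z-x_i)^2}=\sigma'(z).
\]
Clearing denominators gives the polynomial identity $N^2=N'D+ND'$. Writing $N=\Gamma Q$ with $Q\in\Fqm[z]$ and substituting, I get after regrouping the terms divisible by $\Gamma$
\[
\Gamma\bigl(\Gamma Q^2+Q'D+QD'\bigr)=\Gamma'\,Q\,D.
\]
Because $\Gamma$ is square-free we have $\gcd(\Gamma,\Gamma')=1$, and $\gcd(\Gamma,D)=1$ as noted above, so $\Gamma\mid Q$, hence $\Gamma^2\mid N$, concluding the argument.

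The only slightly delicate step I anticipate is the preliminary syndrome characterization (the equivalence between membership in $\Goppa{\xv}{\Gamma}$ and divisibility $\Gamma\mid N$); the derivative identity in characteristic $2$ is where the binary and square-free hypotheses are genuinely used, and it is precisely the mechanism that upgrades one factor of $\Gamma$ to two.
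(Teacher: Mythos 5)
Your proof is correct: the paper states this theorem without proof, citing \cite{P75}, and your argument --- the syndrome characterization $\cv\in\Goppa{\xv}{\Gamma}\iff\Gamma\mid N$, the characteristic-$2$ identity $\sigma^2=\sigma'$ yielding $N^2=N'D+ND'$, and the upgrade $\Gamma\mid N\Rightarrow\Gamma^2\mid N$ via $\gcd(\Gamma,\Gamma')=\gcd(\Gamma,D)=1$ --- is precisely the classical argument from the cited literature. The only implicit point worth making explicit is that $\gcd(\Gamma,\Gamma')=1$ follows from square-freeness because $\F_{2^m}$ is perfect (an irreducible factor dividing $\Gamma'$ would have zero derivative, hence be a square), but this is standard and does not affect correctness.
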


\subsection{Product and square of codes}
The notion of squares of codes is at the core of the high-rate distinguisher as presented in \cite{MT23}.
Given the \textit{component-wise product} of two vectors $\av,\bv\in\F^n$
\[
\starp{\av}{\bv}\eqdef(a_1 b_1,\dots,a_n b_n),
\]
we define the component-wise (or Schur's) product of codes.
\begin{definition}
	The  \textit{component-wise product of codes} $\CC,\DC$ over $\F$ with the same length $n$ is defined as
	\[
	\starp{\CC}{\DC}\eqdef \Span{\starp{\cv}{\dv} \mid \cv \in \CC, \dv \in \DC}{\F}.       \]
	If $\CC=\DC$, we call $\sq{\CC}\eqdef\starp{\CC}{\CC}$ the \textit{square code} of $\CC$. 
\end{definition}

\subsection{Extension of a code over a field extension}
The Frobenius map can be applied component-wise over codes in the following way.
\begin{definition}[Image of a code by the Frobenius map]
	Let $\CC\in \Fqm^n$ be a code and $i$ a non-negative integer. We define by $\CC^{(q^i)}$ as
	\[
	 \CC^{(q^i)}\eqdef \{ (c_1^{q^i},\dots,c_n^{q^i}) \mid (c_1,\dots,c_n)\in \CC\}.
	\]
\end{definition}
For some codes naturally defined over $\Fq$, namely subfield subcodes, we will extensively consider their linear span over a field extension $\Fqm$. More formally, we define the extension of a code over a field extension (or extension of scalars) in the following way.
\begin{definition}[Extension of a code over a field extension]
	Let $\CC$ be a linear code over $\Fq$. We denote by $\CC_{\Fqm}$ the $\Fqm$-linear span of $\CC$ in $\Fqm^n$.
\end{definition}
If we apply this construction on the dual of an alternant code, we get 
\begin{proposition} \label{prop:dual_alt_fqm} \cite{BMT23}
	Let $\Alt{r}{\xv}{\yv}$ be an alternant code over $\Fq$. Then \\
	$
	\left(\Alt{r}{\xv}{\yv}^\perp\right)_{\Fqm} =\sum_{j=0}^{m-1} \GRS{r}{\xv}{\yv}^{(q^j)}= \sum_{j=0}^{m-1} \GRS{r}{\xv^{q^j}}{\yv^{q^j}}.$
\end{proposition}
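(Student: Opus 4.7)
The plan is to reduce the claim to Delsarte's theorem and then observe how Frobenius acts on evaluation codes.

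First I would rewrite the dual of the alternant code. By definition $\Alt{r}{\xv}{\yv}=\GRS{r}{\xv}{\yv}^\perp\cap\Fq^n$, so $\Alt{r}{\xv}{\yv}$ is a subfield subcode. Applying Delsarte's theorem to the $\Fqm$-code $\GRS{r}{\xv}{\yv}^\perp$, its dual satisfies
\[
\Alt{r}{\xv}{\yv}^\perp \;=\; \bigl(\GRS{r}{\xv}{\yv}^\perp\cap\Fq^n\bigr)^\perp \;=\; \Trm{\GRS{r}{\xv}{\yv}},
\]
where the trace is applied component-wise. This is the crucial identity.

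Next I would extend scalars to $\Fqm$. Since $\Trm{\cv}=\sum_{j=0}^{m-1}\cv^{(q^j)}$, every element of $\Trm{\GRS{r}{\xv}{\yv}}$ lies in $\sum_{j=0}^{m-1}\GRS{r}{\xv}{\yv}^{(q^j)}$, so certainly
\[
\bigl(\Alt{r}{\xv}{\yv}^\perp\bigr)_{\Fqm} \;\subseteq\; \sum_{j=0}^{m-1}\GRS{r}{\xv}{\yv}^{(q^j)}.
\]
For the reverse inclusion, it suffices to show each summand $\GRS{r}{\xv}{\yv}^{(q^j)}$ is contained in the left-hand side. Given $\cv\in\GRS{r}{\xv}{\yv}$ and a normal basis $(\alpha_0,\dots,\alpha_{m-1})$ of $\Fqm/\Fq$, the vectors $\Trm{\alpha_i\cv}$ all lie in $\Alt{r}{\xv}{\yv}^\perp$, and the $m\times m$ matrix with entries $\alpha_i^{q^j}$ is invertible; solving this linear system over $\Fqm$ recovers each $\cv^{(q^j)}$ as an $\Fqm$-linear combination of traces, giving the desired inclusion.

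Finally I would verify the identification $\GRS{r}{\xv}{\yv}^{(q^j)}=\GRS{r}{\xv^{q^j}}{\yv^{q^j}}$. A generic codeword of the former is $(y_i P(x_i))_{i}^{q^j}=(y_i^{q^j}P^{(q^j)}(x_i^{q^j}))_{i}$, where $P^{(q^j)}$ denotes the polynomial obtained by raising the coefficients of $P$ to the $q^j$-th power. Because $P\mapsto P^{(q^j)}$ is a bijection on the $\Fqm$-space of polynomials of degree less than $r$, these codewords describe exactly $\GRS{r}{\xv^{q^j}}{\yv^{q^j}}$.

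The only subtlety is the reverse inclusion in the scalar-extension step; the rest is essentially unpacking definitions. Invoking a normal basis (or equivalently the non-degeneracy of the trace pairing) handles this cleanly, so I do not expect any real obstacle.
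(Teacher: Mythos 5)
Your proof is correct: the paper states this proposition without proof, importing it from \cite{BMT23}, and your argument---Delsarte's theorem giving $\Alt{r}{\xv}{\yv}^\perp=\Trm{\GRS{r}{\xv}{\yv}}$ over $\Fq$, expanding $\Trm{\alpha_i\cv}=\sum_{j}\alpha_i^{q^j}\cv^{(q^j)}$ and inverting the matrix $(\alpha_i^{q^j})_{i,j}$ to recover each $\GRS{r}{\xv}{\yv}^{(q^j)}$, then identifying $\GRS{r}{\xv}{\yv}^{(q^j)}=\GRS{r}{\xv^{q^j}}{\yv^{q^j}}$ via the semilinear bijection $P\mapsto P^{(q^j)}$---is essentially the standard proof given in that reference. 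One minor remark: the normal basis is unnecessary, since the Moore matrix $(\alpha_i^{q^j})_{i,j}$ is invertible for \emph{any} $\Fq$-basis $(\alpha_0,\dots,\alpha_{m-1})$ of $\Fqm$, and your first inclusion implicitly uses that the right-hand sum is $\Fqm$-linear, which indeed follows from your final identification; neither point is a gap.
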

This is useful because it allows to view the code generators as the component-wise evaluations of monomials, as this is the case for the GRS codes $ \GRS{r}{\xv^{q^j}}{\yv^{q^j}}$. This perspective is motivated and made possible by the fact that the extension of scalars commutes with all the standard constructions in coding theory. We recall from \cite{R15} the properties that will be implicitly exploited in this work.
\begin{lemma}[from Lemma 2.22 and Lemma 2.23, \cite{R15}]
	Let $\CC,\CC'\subseteq \Fq^n$ be two $\Fq$-linear codes. Then
	\begin{itemize}
		\item If $\Gm$ is a generator matrix for $\CC$ over $\Fq$, then it is also a generator matrix of $\CC_{\Fqm}$ over $\Fqm$.
		\item If $\Hm$ is a parity-check matrix for $\CC$ over $\Fq$, then it is also a parity-check matrix of $\CC_{\Fqm}$ over $\Fqm$.
		\item $(\CC^\perp)_{\Fqm}=(\CC_{\Fqm})^\perp \subseteq \Fqm^n$.
		\item $\CC\subseteq \CC' \iff \CC_{\Fqm}\subseteq \CC'_{\Fqm}$.
		\item $(\CC+\CC')_{\Fqm}=\CC_{\Fqm}+\CC'_{\Fqm}$.
		\item $(\CC\cap\CC')_{\Fqm}=\CC_{\Fqm}\cap\CC'_{\Fqm}$.
		\item $(\starp{\CC}{\CC'})_{\Fqm}=\starp{\CC_{\Fqm}}{\CC'_{\Fqm}}$.
	\end{itemize}
\end{lemma}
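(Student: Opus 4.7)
The plan is to reduce the seven items to a single key technical fact: for any $\Fq$-linear code $\CC \subseteq \Fq^n$, extension of scalars preserves dimension, i.e.\ $\dim_{\Fqm}(\CC_{\Fqm}) = \dim_{\Fq}(\CC)$. I would prove this by fixing an $\Fq$-basis $\cv_1,\dots,\cv_k$ of $\CC$ and an $\Fq$-basis $\beta_1,\dots,\beta_m$ of $\Fqm$ (with $\beta_1 = 1$), and checking $\Fqm$-linear independence: a relation $\sum_i \alpha_i \cv_i = 0$ with $\alpha_i = \sum_j \alpha_{i,j}\beta_j$ rewrites as $\sum_j \beta_j \bigl(\sum_i \alpha_{i,j}\cv_i\bigr) = 0$, where the inner sums lie in $\Fq^n$; reading coordinate-wise and using the $\Fq$-independence of the $\beta_j$'s forces each inner sum to vanish, hence each $\alpha_{i,j} = 0$. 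The same coordinate-wise trick also yields the ``pullback'' identity $\CC_{\Fqm} \cap \Fq^n = \CC$, which I will need for the inclusion item.

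With these two tools, items~(1)--(3) are short. The generator-matrix item is the very definition of $\CC_{\Fqm}$ as the $\Fqm$-span of $\CC$. For the parity-check item, the rank of $\Hm$ is invariant under field extension (any nonzero $\Fq$-minor remains nonzero over $\Fqm$), so if $\Hm \in \Fq^{(n-k) \times n}$ has full row rank and is a parity-check matrix of $\CC$ over $\Fq$, then its kernel in $\Fqm^n$ has $\Fqm$-dimension $k = \dim_{\Fqm}(\CC_{\Fqm})$ and contains $\CC_{\Fqm}$, forcing equality. The dual identity then follows by combining (1) and (2): a parity-check matrix $\Hm$ of $\CC$ is a generator matrix of $\CC^\perp$ over $\Fq$, and hence of $(\CC^\perp)_{\Fqm}$ over $\Fqm$ by (1); by (2) the same $\Hm$ is a parity-check matrix of $\CC_{\Fqm}$, so its rows also generate $(\CC_{\Fqm})^\perp$.

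For item~(4) the forward direction is trivial from the definition, and the converse uses the pullback: $\CC = \CC_{\Fqm} \cap \Fq^n \subseteq \CC'_{\Fqm} \cap \Fq^n = \CC'$. For the sum, the union of $\Fq$-bases of $\CC$ and $\CC'$ spans $\CC+\CC'$ over $\Fq$ and therefore also spans $(\CC+\CC')_{\Fqm}$ over $\Fqm$, while clearly spanning $\CC_{\Fqm}+\CC'_{\Fqm}$ as well. The star-product item is analogous: $\CC \star \CC'$ is $\Fq$-spanned by the pairwise products $\starp{\cv_i}{\cv'_j}$ of basis elements, and the same finite family $\Fqm$-spans $\starp{\CC_{\Fqm}}{\CC'_{\Fqm}}$.

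The one item that I expect to require a different route, and which is the main obstacle, is the intersection formula, since a basis of $\CC\cap\CC'$ is not readily expressible from bases of $\CC$ and $\CC'$. The cleanest route is via duality: I would chain
\[
\bigl((\CC \cap \CC')_{\Fqm}\bigr)^\perp = \bigl((\CC \cap \CC')^\perp\bigr)_{\Fqm} = (\CC^\perp + \CC'^\perp)_{\Fqm} = (\CC^\perp)_{\Fqm} + (\CC'^\perp)_{\Fqm} = (\CC_{\Fqm})^\perp + (\CC'_{\Fqm})^\perp = (\CC_{\Fqm} \cap \CC'_{\Fqm})^\perp,
\]
invoking in order (3), the standard duality $(A\cap B)^\perp = A^\perp + B^\perp$ over $\Fq$, (5), (3) again, and the same standard duality now over $\Fqm$; taking perpendiculars once more then gives the desired equality. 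The only subtle point is the ordering of the arguments, since (6) depends on both (3) and (5); I would therefore establish the seven items strictly in the order (1), (2), (3), (4), (5), (7), (6) to avoid any circularity.
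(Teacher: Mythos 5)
Your argument is correct, but there is nothing in the paper to compare it against: the paper states this lemma purely as a recalled result, citing Lemmas 2.22 and 2.23 of \cite{R15} with no proof, so you have supplied a self-contained verification where the paper offers only a citation. Checking it on its own merits: your two tools (invariance of dimension under extension of scalars, and the pullback $\CC_{\Fqm}\cap\Fq^n=\CC$) are established soundly by the coordinate-wise trick; the duality chain for the intersection item is valid over finite fields, since $(A^\perp)^\perp=A$ and $(A\cap B)^\perp=A^\perp+B^\perp$ hold for the standard (possibly isotropic) bilinear form by dimension counting; and your ordering $(1),(2),(3),(4),(5),(7),(6)$ does break the apparent circularity. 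One observation worth recording: the coordinate-wise argument you already use for the pullback in fact proves the stronger decomposition $\CC_{\Fqm}=\bigoplus_{j=1}^m \beta_j\,\CC$ inside $\Fqm^n=\bigoplus_{j=1}^m \beta_j\,\Fq^n$ (writing $\vv=\sum_j\beta_j\wv_j$ with $\wv_j\in\Fq^n$, membership in $\CC_{\Fqm}$ forces every component $\wv_j$ to lie in $\CC$), and from this the intersection formula is immediate: if $\vv\in\CC_{\Fqm}\cap\CC'_{\Fqm}$ then each $\wv_j\in\CC\cap\CC'$, so $\vv\in(\CC\cap\CC')_{\Fqm}$. That one-line descent argument would let you dispense with the duality detour and the ordering constraints altogether; your version trades it for heavier machinery but loses nothing in correctness. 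A minor quibble: in item (1), $\Fqm$-linear independence of the rows of $\Gm$ is not ``the very definition'' of $\CC_{\Fqm}$ but follows from your key dimension fact --- which you have proved anyway, so the gap is cosmetic.
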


\subsection{The matrix code of quadratic relationships}
In \cite{CMT23}, a new object is associated with a linear code: the matrix code of quadratic relationships. This notion is strictly related to that of the square code. If $\CC=\Fspan{\cv_1,\dots,\cv_k}\subseteq \F^n$ is a $k$-dimensional linear code, then a natural set of generators for its square code $\sq{\CC}$ is given by all component-wise products of codewords $\starp{\cv_i}{\cv_j}$, $1\le i\le j \le k$. Some of these generators may be linearly dependent and this is even expected for random codes when the dimension $k$ is big enough compared to $n$, so that the whole space is reached by the square code. When $\CC$ is the dual of an alternant or Goppa code, some very structured linear dependencies appear among the generators of $\sq{\CC}$. In other words, quadratic relationships among the generators of $\CC$ are guaranteed to exist. The high-rate distinguisher relies on this fact, by counting the dimension of the space generated by such quadratic dependencies and comparing them with those for a random code, if any. However, unless the rate of the Goppa code is very high, the dimensions of the two corresponding spaces are the same, even if the hidden structure is different. The matrix code serves as a tool to distinguish Goppa codes even when the two dimensions match. 

 In particular, the contributions of \cite{CMT23} can be summarized as follows:
\begin{enumerate}
	\item A general and simple procedure to distinguish alternant and Goppa codes of rate at least 2/3 from random ones has been introduced. This consists of defining a determinantal ideal, \ie an ideal generated by minors of a given size. In the mentioned rate regime, the matrix code of quadratic relations constructed from a random code is not expected to contain nonzero matrices of rank at most 3. Therefore, the variety associated with the ideal of minors of size 4 is trivial. The inverse happens for the matrix code associated with an alternant or a Goppa code: low-rank matrices are guaranteed to exist in it. Hence, solving the polynomial system of minors (or of Pfaffians) allows to discriminate between random and alternant or Goppa codes. The complexity of this new distinguisher smoothly interpolates between polynomial (in the regime already distinguishable by \cite{FGOPT11}) and superexponential for constant rates. In other words, it has subexponential complexity for families of alternant or Goppa codes whose dimension $k$ grows between linear and quadratic with respect to the codimension $n-k$.

	\item A new polynomial-time structural attack on alternant and Goppa codes in the rate of parameters distinguishable by \cite{FGOPT11, MT23} has been devised. This contribution extends the results of \cite{BMT23} where generic alternant codes with binary or ternary field size have been attacked. In particular, the algorithm from \cite{CMT23} in concert with \cite{BMT23} breaks all distinguishable alternant codes (for any field size) and all distinguishable Goppa codes whose Goppa polynomials have degree $r<q-1$.
\end{enumerate}

In the following, we will recall the definition of matrix code of quadratic relations and the main results about it. 
Exploiting Proposition~\ref{prop:dual_alt_fqm}, we define an ordered basis, and call it \textbf{canonical basis}, $\cA$ of $\left(\Alt{r}{\xv}{\yv}^\perp\right)_{\Fqm}$:
\begin{equation} \label{eq: basisA}
	\cA\eqdef (\yv,\xv \yv,\dots,\xv^{r-1} \yv,\dots, \yv^{q^{m-1}},(\xv \yv)^{q^{m-1}},\dots,(\xv^{r-1} \yv)^{q^{m-1}}).
\end{equation}
We remark that $\cA$ is an unknown basis from the attacker's point of view, as he/she does not have access a priori to the basis of a single GRS code $\GRS{r}{\xv}{\yv}^{q^j}$ and even less to the monomial basis. In the attack from \cite{CMT23}, $\cA$ was indeed the secret basis to recover. We also denote with $\cB$ another (public) basis of $\left(\Alt{r}{\xv}{\yv}^\perp\right)_{\Fqm}$.

Observe that, for any $0\le a,b,c,d < r$ such that $a+b=c+d$, we have $\xv^a\yv,\xv^b \yv, \xv^c\yv,\xv^d \yv\in \GRS{r}{\xv}{\yv}$ and these codewords satisfy the following quadratic relationship: 
\begin{equation}\label{eq:quadratic_relation}
	\starp{(\xv^a\yv)}{(\xv^b\yv)} = \starp{(\xv^c\yv)}{(\xv^d\yv)}.
\end{equation}
Proposition~\ref{prop:dual_alt_fqm} explains why the existence of these quadratic relationships is preserved when considering subfield subcodes of GRS codes, \ie alternant (including Goppa) codes. Moreover, other structured relations arise from the subfield subcode construction, because codewords from different $\GRS{r}{\xv}{\yv}^{q^j}$'s may be involved. The following definition captures the fact that in general quadratic relationships form a vector space. 
\begin{definition}[Code of quadratic relationships, \cite{CMT23}] \label{def: crel}
	Let $\CC$ be an $[n,k]$ linear code over $\F$ and let $\cV=\{\vv_1,\dots,\vv_k\}$ be a basis of $\CC$. The \textbf{code of quadratic relationships between the Schur's products with respect to $\cV$} is
	\[
	\Crel(\cV)\eqdef\{\cv=(c_{i,j})_{1\le i\le j \le k} \mid \sum_{i \le j} c_{i,j} \starp{\vv_i}{\vv_j}=0\} \subseteq \F^{\binom{k+1}{2}}.
	\] 
\end{definition}
By considering codewords corresponding to the identities \eqref{eq:quadratic_relation}, we can therefore predict the shape of low (Hamming) weight codewords in the code of quadratic relationships $\Crel(\cA)$, where $\cA$ is given in \eqref{eq: basisA}. This fact has potential interest for cryptanalysis, however the same does not happen in general for $\Crel(\cB)$, where $\cB$ is another basis of $\left(\Alt{r}{\xv}{\yv}^\perp\right)_{\Fqm}$. Not only do we not know the shape of low-weight codewords in $\Crel(\cB)$, but these are not even guaranteed to exist.  This suggests that the Hamming distance is not the right metric to look at and that another perspective is required.

Note that any element $\cv=(c_{i,j})_{1\le i\le j \le k}\in\Crel(\cV)$ defines a quadratic form as
$$
Q_{\cv}(x_1,\cdots,x_k) = \sum_{ i\le j }  c_{i,j} x_i x_j.
$$
For instance, the relationship in \eqref{eq:quadratic_relation} is associated with the low-rank quadratic form
\[
x_{a+1} x_{b+1} - x_{c+1} x_{d+1}.
\]
We can therefore represent the elements of $\Crel(\cV)$ as matrices corresponding to the bilinear map given by the polar form of the quadratic form, i.e. the matrix $\Mm_{\cv}$ corresponding to $\cv \in \Crel(\cV)$ that satisfies for 
all $\wv$ and $\zv$ in $\Fqm^k$
\begin{equation}
	\label{eq:polarform}
	\wv \Mm_{\cv} \trsp{\zv} = Q_{\cv}(\wv+\zv) -Q_{\cv}(\wv)-Q_{\cv}(\zv).
\end{equation}
Note that $\Mm_{\cv}$ is symmetric in odd characteristic, whereas it is skew-symmetric in characteristic $2$.
This definition provides a matrix perspective on the space of quadratic relationships that fits in both the odd characteristic and characteristic $2$ cases:
\begin{definition}[Matrix code of relationships, \cite{CMT23}] \label{def: cmat_odd}
	Let $\CC$ be an $[n,k]$ linear code over $\F$ and let $\cV=\{\vv_1,\dots,\vv_k\}$ be a basis of $\CC$. The \textbf{matrix code of relationships between the Schur's products with respect to $\cV$} is
	\[
	\Cmat(\cV)\eqdef\{\Mm_{\cv}=(m_{i,j})_{\substack{1\le i\le k \\ 1\le j\le k}} \mid \cv=(c_{i,j})_{1\le i\le j \le k} \in \Crel(\cV)  \} \subseteq \Sym(k, \F),
	\]
	where  
	$\Mm_{\cv}$ is defined as
	$
	\begin{cases}
		m_{i,j}\eqdef  m_{j,i} \eqdef c_{i,j},&\quad 1\le i< j\le k,\\
		m_{i,i} \eqdef  2c_{i,i},&\quad 1\le i\le k.
	\end{cases}
	$
\end{definition}

The turning point of this approach is that, regardless of the chosen basis, the matrix code of relationships contains low-weight elements when computed with respect to the usual rank metric
\[
d(\Mm_1, \Mm_2)\eqdef \rank(\Mm_1-\Mm_2).
\]
Indeed we have
\begin{proposition}[\cite{CMT23}]  \label{prop: congr_odd}
	Let $\cA$ and $\cB$ be two bases of a same $[n,k]$ $\F$-linear code $\CC$. Then $\Cmat(\cA)$ and $\Cmat(\cB)$ are congruent matrix codes, \ie there exists $\Pm\in \GL_k(\F)$ such that
	\begin{equation}
		\Cmat(\cA)=\trsp{\Pm}\Cmat(\cB) \Pm.
	\end{equation}
	The matrix $\Pm$ coincides with the change of basis matrix between $\cA$ and $\cB$.
\end{proposition}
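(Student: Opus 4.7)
The plan is to substitute the change of basis directly into the defining relation of the matrix code. Fix the convention that $\Pm = (p_{i,j}) \in \GL_k(\F)$ is the change of basis matrix from $\cB$ to $\cA$, \ie $\bv_i = \sum_{j=1}^k p_{i,j}\, \av_j$ for every $i \in \Iintv{1}{k}$. Bilinearity of the component-wise product immediately gives
\[
\starp{\bv_i}{\bv_j} = \sum_{s,t=1}^{k} p_{i,s}\, p_{j,t}\, \starp{\av_s}{\av_t}.
\]

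Next I would pick an arbitrary $\Nm = (n_{i,j}) \in \Cmat(\cB)$ corresponding to some $\cv = (c_{i,j})_{i\le j} \in \Crel(\cB)$. In odd characteristic the quadratic relation $\sum_{i\le j} c_{i,j} \starp{\bv_i}{\bv_j} = 0$ is equivalent, after doubling and using $n_{i,i} = 2c_{i,i}$, to the symmetric identity $\sum_{i,j} n_{i,j}\starp{\bv_i}{\bv_j} = 0$. Substituting the expansion above and regrouping the summation indices yields
\[
0 = \sum_{i,j} n_{i,j}\starp{\bv_i}{\bv_j} = \sum_{s,t} \left(\sum_{i,j} p_{i,s}\, n_{i,j}\, p_{j,t}\right) \starp{\av_s}{\av_t} = \sum_{s,t} (\trsp{\Pm}\Nm\Pm)_{s,t}\, \starp{\av_s}{\av_t}.
\]
This identifies $\trsp{\Pm}\Nm\Pm$ as a matrix in $\Cmat(\cA)$, proving $\trsp{\Pm}\Cmat(\cB)\Pm \subseteq \Cmat(\cA)$. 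The reverse inclusion then follows either by repeating the argument with the bases swapped and $\Pm^{-1}$, or by noting that $\Nm \mapsto \trsp{\Pm}\Nm\Pm$ is a linear bijection of the ambient matrix space.

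The remaining verification is that the congruence transformation preserves the correct ambient space in both parities: symmetry is automatic, and in characteristic $2$ the zero-diagonal (alternating) condition is also preserved, since $(\trsp{\Pm}\Nm\Pm)_{i,i} = \sum_s p_{s,i}^2\, n_{s,s} + 2\sum_{s<t} p_{s,i}p_{t,i}\, n_{s,t} = 0$ whenever $\Nm$ has zero diagonal. I do not expect a substantive obstacle: the claim is essentially that the $\GL_k(\F)$-action on bases induces the congruence action on the space of quadratic forms, and the verification is a direct index manipulation. The only points that demand care are fixing the convention for the change of basis matrix so that the result comes out as $\trsp{\Pm}\Cmat(\cB)\Pm$ rather than the conjugate formula $\Pm\Cmat(\cB)\trsp{\Pm}$, and uniformly handling the characteristic $2$ reformulation of the relation in terms of off-diagonal coefficients.
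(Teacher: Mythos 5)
Your odd-characteristic argument is sound, and your convention ($\bv_i=\sum_j p_{i,j}\av_j$, i.e.\ $\Hm_{\cB}=\Pm\Hm_{\cA}$) does land on the stated formula. Note that the paper itself imports this proposition from \cite{CMT23} without reproducing a proof, so the comparison below is with the argument that the polar-form setup \eqref{eq:polarform} is designed for. The genuine gap is the characteristic-$2$ case, which you defer to a final ``remaining verification'' that does not close it. In characteristic $2$ your central display is vacuous: for \emph{any} symmetric matrix $\Nm$ with zero diagonal one has
\[
\sum_{i,j} n_{i,j}\,\starp{\bv_i}{\bv_j} \;=\; 2\sum_{i<j} n_{i,j}\,\starp{\bv_i}{\bv_j} \;=\; 0
\]
identically, whether or not $\Nm$ comes from a relation, so the regrouped identity $\sum_{s,t}(\trsp{\Pm}\Nm\Pm)_{s,t}\,\starp{\av_s}{\av_t}=0$ carries no information there. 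Likewise, your diagonal computation only shows $\trsp{\Pm}\Nm\Pm\in\Skew(k,\F)$, which is a statement about the ambient space, not about membership in $\Cmat(\cA)$. The point is that in characteristic $2$ the map $\cv\mapsto\Mm_{\cv}$ of Definition~\ref{def: cmat_odd} forgets the diagonal coefficients $c_{i,i}$, while the relation $\sum_{i\le j}c_{i,j}\starp{\bv_i}{\bv_j}=0$ genuinely involves the squares $\starp{\bv_i}{\bv_i}$; so exhibiting $\trsp{\Pm}\Nm\Pm$ as an element of $\Cmat(\cA)$ requires producing a full coefficient vector $\cv'\in\Crel(\cA)$, diagonal included, whose polar matrix is $\trsp{\Pm}\Nm\Pm$. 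Since the paper's application (square-free binary Goppa codes) lives entirely in characteristic $2$, this is precisely the half of the proposition that matters.

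The repair is to transport the quadratic form rather than the matrix identity. A vector $\cv\in\Crel(\cB)$ says exactly that $Q_{\cv}$ vanishes on each column of $\Hm_{\cB}=\Pm\Hm_{\cA}$ (columns read as coordinate vectors); hence $Q'(\wv)\eqdef Q_{\cv}(\wv\trsp{\Pm})$ vanishes on each column of $\Hm_{\cA}$, so its coefficient vector $\cv'$ --- whose diagonal coefficients are carried along even though they are invisible in the matrix --- lies in $\Crel(\cA)$. By \eqref{eq:polarform}, the polar matrix of $Q'$ is $\trsp{\Pm}\Mm_{\cv}\Pm$, whence $\Mm_{\cv'}=\trsp{\Pm}\Mm_{\cv}\Pm\in\Cmat(\cA)$, uniformly in every characteristic (your odd-characteristic computation is recovered as the special case where the polar form determines and is determined by $Q$ up to the factor $2$). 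Equality then follows as you indicate, either by swapping the roles of the two bases with $\Pm^{-1}$, or by combining the inclusion with the fact that congruence by an invertible matrix is injective and that $\dim_{\F}\Cmat(\cA)=\dim_{\F}\Cmat(\cB)$, which holds by Proposition~\ref{prop:dimension} since $\dim_{\F}\sq{\CC}$ is basis-independent.
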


The proposition above readily implies that the weight distributions of $\Cmat(\cA)$ and $\Cmat(\cB)$ (again with respect to the rank metric) are the same. Another trivial invariant is the matrix code dimension:
\begin{proposition}[\cite{CMT23}] \label{prop:dimension}
	Let $\CC\subseteq \F^n$ be an $[n,k]$ linear code with ordered basis $\cV$. Then
	\[
		\dim_{\F} \Cmat(\cV)=\dim_{\F} \Crel(\cV)=\binom{k+1}{2}-\dim_{\F} \sq{\CC}.
\]
\end{proposition}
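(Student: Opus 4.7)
The plan is to reduce the proposition to two short linear-algebra facts: a rank-nullity statement for $\Crel(\cV)$, and an isomorphism statement between $\Crel(\cV)$ and $\Cmat(\cV)$.

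First I would consider the $\F$-linear evaluation map
\[
\phi \colon \F^{\binom{k+1}{2}} \longrightarrow \F^n, \qquad (c_{i,j})_{1\le i\le j\le k} \longmapsto \sum_{i\le j} c_{i,j}\,\starp{\vv_i}{\vv_j}.
\]
By the very definitions, $\ker \phi = \Crel(\cV)$, and $\im \phi = \Span{\starp{\vv_i}{\vv_j} \mid 1\le i\le j\le k}{\F}$. Since $\cV$ generates $\CC$, the set $\{\starp{\vv_i}{\vv_j}\}_{i\le j}$ spans $\sq{\CC}$, so $\im \phi = \sq{\CC}$. Rank-nullity for $\phi$ then gives
\[
\dim_{\F}\Crel(\cV) \;=\; \binom{k+1}{2} - \dim_{\F}\sq{\CC}.
\]

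Next I would establish the equality $\dim_{\F}\Cmat(\cV)=\dim_{\F}\Crel(\cV)$ by showing that $\psi \colon \Crel(\cV)\to\Cmat(\cV)$, $\cv\mapsto \Mm_{\cv}$, is an $\F$-linear isomorphism. Linearity and surjectivity are immediate from Definition~\ref{def: cmat_odd}; what needs to be checked is injectivity. In odd characteristic this is trivial: the formulas $c_{i,j}=m_{i,j}$ ($i<j$) and $c_{i,i}=m_{i,i}/2$ give an explicit inverse. In characteristic $2$, $\Mm_{\cv}$ has zero diagonal, so $\ker\psi$ consists of those $\cv\in\Crel(\cV)$ with $c_{i,j}=0$ for $i\neq j$; such a $\cv$ lies in $\Crel(\cV)$ iff $\sum_i c_{i,i}\,\vv_i^{\star 2}=0$. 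Over a perfect field of characteristic $2$, I can write $c_{i,i}=d_i^2$ and use the identity $\sum_i d_i^2\,\vv_i^{\star 2}=(\sum_i d_i\,\vv_i)^{\star 2}$; the linear independence of $\vv_1,\dots,\vv_k$ forces every $d_i=0$, hence every $c_{i,i}=0$. Thus $\psi$ is injective in both characteristics.

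The main technical point is really this last observation in characteristic $2$: that the component-wise square of a basis of $\CC$ remains linearly independent over a perfect field. The rest is routine rank-nullity bookkeeping. Combining the two steps,
\[
\dim_{\F}\Cmat(\cV)=\dim_{\F}\Crel(\cV)=\binom{k+1}{2}-\dim_{\F}\sq{\CC}.
\]
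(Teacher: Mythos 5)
Your proof is correct. Note that the paper itself offers no argument to compare against: Proposition~\ref{prop:dimension} is imported from \cite{CMT23} and stated without proof, so your write-up is a self-contained reconstruction, and it follows the natural (and, in substance, the cited source's) route. The rank--nullity step via the evaluation map $\phi$ is routine and correctly identifies $\ker\phi=\Crel(\cV)$ and $\im\phi=\sq{\CC}$ (the products $\starp{\vv_i}{\vv_j}$ with $i\le j$ span $\sq{\CC}$ by bilinearity and symmetry of the Schur product). You also correctly isolate the only delicate point, namely injectivity of $\cv\mapsto\Mm_{\cv}$ in characteristic $2$, where the diagonal coefficients $c_{i,i}$ are annihilated by the factor $2$ in Definition~\ref{def: cmat_odd}: your reduction, writing $c_{i,i}=d_i^2$ and using the Frobenius semilinearity $\sum_i d_i^2\,\vv_i^{\star 2}=\bigl(\sum_i d_i\,\vv_i\bigr)^{\star 2}$ to conclude from the linear independence of $\vv_1,\dots,\vv_k$, is exactly the right mechanism, and the perfectness hypothesis you invoke is automatic in the paper's setting since $\F$ there denotes a finite field. (As a side remark, perfectness is not even needed: expanding the scalars $c_{i,i}$ along an $\F^2$-basis of $\F$ reduces the general characteristic-$2$ case to the same semilinearity identity, so component-wise squares of linearly independent vectors are linearly independent over any field of characteristic $2$; but this extra generality plays no role here.)
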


If $\cV$ is a basis of $\Alt{r}{\xv}{\yv}^\perp_{\Fqm}$, then $\Cmat(\cV)$ contains rank-3 matrices in odd characteristic and rank 2-matrices in characteristic 2, obtained from \eqref{eq:quadratic_relation} by taking $c=d$. These elements are categorized as Type 1 matrices in the following section.

In \cite{CMT23}, the question of whether matrices of such low rank are expected in $\Cmat(\cV)$, where $\cV$ is the basis of a random $[n,rm]$ $\Fqm$-linear, code has been addressed. 
By computing the Gilbert-Varshamov distance with respect to the space of (skew-)symmetric matrices, it was shown in \cite[Proposition 10]{CMT23} that matrices of rank $\le 3$ belong to $\Cmat(\cV)$ with non-negligible probability iff
\begin{equation} \label{eq: n le 3rm-3}
	n\le 3rm-3.
\end{equation}
Therefore, it is possible to set up an algebraic modeling for the (skew-)symmetric variant of the MinRank problem with target rank 2 or 3.
\begin{problem}[(Skew-)Symmetric MinRank problem for rank $r$]
	Let $\Mm_1,\cdots,\Mm_K$ be (skew-)symmetric matrices in $\F^{N \times N}$. Find an $\Mm \in \Fspan{\Mm_1,\cdots,\Mm_K}$ of rank $r$.
\end{problem}
We remark that:
\begin{itemize}
	\item if one is able to prove that the MinRank instance has no nonzero solution, then it means that the instance is not an alternant (or Goppa) code, which leads to a distinguisher;
	\item all the parameters used in Classic McEliece are such that $n> 3rm-3$.
\end{itemize}
From these two observations, a special MinRank modeling for the case of characteristic 2 has been introduced in \cite{CMT23} and a complexity estimate of the corresponding distinguisher has been determined.

In the rest of the paper, we will deepen the study of the matrix code originated by a generic alternant code or a Goppa code, describing and categorizing structured matrices lying in it. Thanks to this characterization, we also present an adaptation of a polynomial-time attack from \cite{CMT23} to binary Goppa codes of degree 2. 

\section{The Goppa code representation}

Imagine that we are able to recover a pair $(\xv',\yv')$ of valid support and multiplier for the Goppa code $\Goppa{\xv}{\Gamma}$, \ie such that $\Goppa{\xv}{\Gamma}=\Alt{r}{\xv'}{\yv'}$. This task will be the goal of the next sections. In general, $\xv'$ and $\yv'$ do not coincide with $\xv$ and $1/\Gamma(\xv)$ respectively, as the public code is not uniquely determined by a pair of support and multiplier and there is no way to recover the original ones. In addition, $\yv'$ is not even guaranteed to be the inverse of the evaluation over $\xv'$ of a degree-$r$ polynomial $\Gamma'$. The next definition formalizes this concept.
\begin{definition}[Goppa code representation]
	Let $\Goppa{\xv}{\Gamma}\subseteq \Fq^n, \deg(\Gamma)=r$, be an $[n, n-rm]$ Goppa code obtained as the subfield subcode of a Reed-Solomon code defined over an extension field of degree $m$. The pair $(\xv',\yv')$ is said to be an $r$-\textit{Goppa code representation} of $\Goppa{\xv}{\Gamma}$ if $\Goppa{\xv}{\Gamma}=\Alt{r}{\xv'}{\yv'}$ and $\yv'=\frac{1}{\Gamma'(\xv')}$ for some $\Gamma'\in  \Fqm[z]$ of degree $r$.
\end{definition}
\begin{remark}
	The definition above takes into account that Goppa polynomials of different degrees can be defined. For instance any pair of support and multiplier $(\xv',\yv')$ is such $\yv'=\frac{1}{\Gamma'(\xv')}$ for some $\Gamma$ of degree $\le n-1$ because of the interpolation theorem. Moreover, in the binary case, if $\Gamma$ is square-free then Proposition~\ref{thm: binary_Goppa->Alt} provides a new Goppa polynomial of degree $2r$. In other words, the degree $r$ required in the definition is the minimal one, \ie that for which $r=\frac{n-k}{m}$, where $k$ is the Goppa code dimension.
\end{remark}

While any valid pair of support and multiplier permits to decode a Goppa code, there exists a crucial difference that makes Goppa code representations particularly relevant. Again, this distinction is witnessed in the binary square-free Goppa case. In principle, knowing a generic pair $(\xv,\yv)$ enables to decode up to $\frac{r}{2}$ errors. Because of Proposition~\ref{thm: binary_Goppa->Alt}, a Goppa code representation readily allows to see the Goppa code as an alternant code of degree $2r$ and thus leads to an improved decoding capability of $r$ errors. In the McEliece scheme, and more in general in code-based cryptography, the error weight is chosen to be close or equal to the maximum of coordinates that a decoder can correct in such a way to increase the difficulty of decrypting for an attacker. This means that a valid pair $(\xv,\yv)$ for a binary square-free Goppa code is not enough to efficiently decode errors of weight above $r/2$. 

Furthermore, the TII challenges accept solutions in the format `support + Goppa code', thus, translating to our formalism, they implicitly require to find a Goppa code representation.

In the following, we will then show how to move from a generic pair $(\xv,\yv)$ to an equivalent one $(\xv',\yv')$ that is a Goppa code representation. The argument is not limited to $r=2$, but works for any Goppa polynomial degree. This problem has already been partially investigated in \cite[Section 4.4.6]{M23} in relation to the parity-check subcode of a Goppa code, the filtration and Gr\"obner basis computation steps in the distinguisher-based attack \cite{BMT23}. The group of transformations that map a valid pair $(\xv, \yv)$ into another one $(\xv', \yv')$ has been described in \cite{D87} for Cauchy codes and their subfield subcodes. Cauchy codes are the generalization of GRS codes over the projective line $\mathbb{P}^1(\Fqm)$. However, the reconstruction of $(\xv, \yv)$ provided in the previous section guarantees that $\xv\in \Fqm^n\subset \mathbb{P}^1(\Fqm)^n$. We recall from \cite{M23} that, when $\xv\in \Fqm^n$ (and $\yv\in (\Fqm^*)^n)$, the restricted map from \cite{D87} becomes 
\begin{equation} \label{eq: supp_mult_prime} 
	\begin{array}{cccc}
		f \colon & \Fqm^n\times (\Fqm^*)^n& \to & \mathbb{P}^1(\Fqm) \times   (\Fqm^*)^n\\ & (\xv,\yv)& \mapsto &(\xv',\yv')=(\frac{a\xv+b}{c\xv+d}, \lambda (c\xv+d)^{r-1} \yv),
	\end{array} 
\end{equation}
for some $a,b,c,d,\lambda \in \Fqm$, such that $ad-bc\ne 0$ and $\lambda \ne 0$. We want to determine values of $a,b,c,d,\lambda$ that map into admissible vectors $\xv'\in \Fqm^n$ and $\yv'\in (\Fqm^*)^n$. This conditions are equivalent to say that $0\notin \{c x_i +d \mid i \in \Iintv{1}{n}\}$. Moreover, we want $(\xv', \yv')$ to be a Goppa code representation. The property of being a Goppa codes representation is preserved if and only if $f$ is an affine map.
\begin{proposition} \label{prop: f_affine}
	Let $\Goppa{\xv}{\Gamma}=\Alt{r}{\xv}{\yv}$ be an $[n, n-rm]$ Goppa code with $\yv=\frac{1}{\Gamma(\xv)}$ and $\deg(\Gamma)=r$. Then $(\xv',\yv')=f((\xv,\yv))$ defined as in \eqref{eq: supp_mult_prime} is an $r$-Goppa representation if and only if $f$ is an affine transformation.
\end{proposition}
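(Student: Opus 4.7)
The plan is to rewrite the Goppa-representation condition as a polynomial identity in one indeterminate and then compare degrees. Write $\xv'=\frac{a\xv+b}{c\xv+d}$ and $\yv'=\lambda(c\xv+d)^{r-1}\yv=\frac{\lambda(c\xv+d)^{r-1}}{\Gamma(\xv)}$. Assuming $(\xv',\yv')$ is an $r$-Goppa code representation, there exists $\Gamma'\in\Fqm[z]$ of degree exactly $r$ with $\yv'=\frac{1}{\Gamma'(\xv')}$. Introduce the auxiliary univariate polynomial
\[
P(z)\eqdef (cz+d)^r\,\Gamma'\!\left(\frac{az+b}{cz+d}\right)=\sum_{i=0}^r \gamma_i\,(az+b)^i(cz+d)^{r-i},
\]
where $\gamma_0,\ldots,\gamma_r$ are the coefficients of $\Gamma'$. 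Then $P\in\Fqm[z]$ has degree at most $r$, and the defining equality $\Gamma'(x'_i)=1/y'_i$ evaluated at each support coordinate becomes
\[
\lambda\,P(x_i)=(cx_i+d)\,\Gamma(x_i),\qquad i\in\Iintv{1}{n}.
\]

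Next I would upgrade this pointwise identity to a polynomial identity. The $x_i$'s are pairwise distinct and for every nontrivial Goppa code we have $n\ge rm+1\ge r+2$, so the polynomial $\lambda P(z)-(cz+d)\Gamma(z)$, which has degree at most $r+1$, has more than $r+1$ roots, hence vanishes identically:
\[
\lambda P(z)=(cz+d)\,\Gamma(z)\quad\text{in }\Fqm[z].
\]
The left-hand side has degree at most $r$, whereas the right-hand side has degree $r+1$ when $c\neq 0$, because its leading coefficient is $c$ times the (nonzero) leading coefficient of $\Gamma$. This forces $c=0$, and then $ad-bc=ad\neq 0$ gives $d\neq 0$, so $f(z)=\frac{a}{d}z+\frac{b}{d}$ is indeed affine.

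For the converse, assume $c=0$. The map becomes $x'_i=\tfrac{a}{d}x_i+\tfrac{b}{d}$ and $y'_i=\lambda d^{r-1}/\Gamma(x_i)$. Define
\[
\Gamma'(z)\eqdef\frac{1}{\lambda d^{r-1}}\,\Gamma\!\left(\frac{dz-b}{a}\right),
\]
which is a polynomial of degree exactly $r$ since the inner substitution $z\mapsto(dz-b)/a$ is a nondegenerate affine change of variable. A direct substitution yields $\Gamma'(x'_i)=\Gamma(x_i)/(\lambda d^{r-1})=1/y'_i$ for all $i$, which shows that $(\xv',\yv')$ is an $r$-Goppa code representation.

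The only mildly delicate step is the passage from the pointwise identity to the polynomial identity; everything else is just bookkeeping of degrees and leading coefficients. Since $n$ is always large compared to $r$ for the parameters of interest, the degree-counting argument works uniformly, and the proof does not rely on the characteristic or on $r=2$.
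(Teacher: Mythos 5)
Your proof is correct and rests on the same core computation as the paper's---transporting the Goppa relation through the fractional map and comparing degrees---but you run it in the opposite direction and close a step the paper leaves implicit. The paper inverts $f$, writes $\yv'$ as a rational function of $\xv'$ whose numerator is $A\xv'+B$ with $A=-\lambda(ad-bc)^{r-1}c$, and concludes that being an $r$-Goppa representation forces $A=0$, i.e.\ $c=0$; this silently assumes that agreement of $1/\Gamma'(\xv')$ with that reduced rational function at the $n$ support points forces the identity of functions. You instead stay with the original support variable, clear denominators into the single polynomial identity $\lambda P(z)=(cz+d)\Gamma(z)$, and justify the pointwise-to-identity passage by counting roots against the degree bound $r+1$---precisely the interpolation argument the paper's proof glosses over. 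Likewise, your explicit $\Gamma'(z)=\frac{1}{\lambda d^{r-1}}\,\Gamma\bigl(\frac{dz-b}{a}\bigr)$ in the converse direction makes concrete what the paper only records as the simplified form $\xv'=\frac{a}{d}\xv+\frac{b}{d}$, $\yv'=\lambda d^{r-1}\yv$. One small inaccuracy to fix: the chain $n\ge rm+1\ge r+2$ needs $m\ge 2$; for $m=1$ (a degenerate case in which the Goppa code is just a dual GRS code, but not excluded by the statement) you only get $n\ge r+1$ evaluation points against a polynomial of degree up to $r+1$, which is not enough to force it to vanish. Either assume $m\ge 2$, or state $n\ge r+2$ as an explicit nondegeneracy hypothesis; this has no bearing on the paper's parameter regimes, where $n>3rm-3$.
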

\begin{proof}
	Inverting $f$, we obtain the relation
	\[
	\xv=\frac{a' \xv' +b'}{c' \xv' +d'},
	\]
	where 
	\[
	\begin{bmatrix} a' & b' \\ c' & d' \end{bmatrix} =\begin{bmatrix} a & b \\ c & d \end{bmatrix}^{-1}=\begin{bmatrix} d & -b \\ -c & a \end{bmatrix},
	\]
	thus leading to
	\[
	\xv=\frac{d \xv' -b}{-c \xv' +a}.
	\]
	Note that
	\[
	c \xv+d=\frac{cd\xv'-bc}{-c\xv'+a}+d=\frac{ad-bc}{-c \xv'+a}.
	\]
	The coordinates of the multiplier $\yv'$ can thus be formulated as the evaluation of a rational function in the coordinates of $\xv'$ as
	\begin{align*}
		\yv'=& \lambda (c\xv+d)^{r-1} \yv 	= \frac{\lambda (c\xv+d)^{r-1}}{\Gamma\left(\frac{d\xv'-b}{-c\xv'+a}\right)} = \frac{\lambda (c\xv+d)^{r-1}}{\sum_{i=0}^r \gamma_i \left(\frac{d\xv'-b}{-c\xv'+a}\right)^i} =\frac{\lambda (c\xv+d)^{r-1}(-c \xv'+a)^r}{\sum_{i=0}^r \gamma_i (d\xv'-b)^{i}(-c\xv'+a)^{r-i}}  \\
		=& \frac{\lambda (ad-bc)^{r-1}(-c \xv'+a)}{\sum_{i=0}^r \gamma_i (d\xv'-b)^{i}(-c\xv'+a)^{r-i}}. \\
	\end{align*}
	In other words, the reduced form of such rational function has in general a numerator of degree 1 and a denominator of degree $r$, \ie 
	\[
	\yv'= \frac{A\xv'+B}{\sum_{i=0}^r {\gamma_i}' (\xv')^i},
	\]
	with $A=-\lambda (ad-bc)^{r-1}c$ and $B=\lambda (ad-bc)^{r-1}a$. In particular, the sufficient and necessary condition for $(\xv',\yv')$ being an $r$-Goppa code representation is $A=0\iff c=0$. In this case 
	\[
	\xv'= \frac{a}{d} \xv + \frac{b}{d} \quad \text{and} \quad \yv'= \lambda d^{r-1} \yv.
	\]
\end{proof}

With the previous proposition at hand, we thus distinguish two cases:
\begin{itemize}
	\item The Goppa code is full-support, \ie $\{x_i \mid i \in \Iintv{1}{n}\}=\Fqm$. In this case $\{c x_i +d \mid i \in \Iintv{1}{n}\}=\{c z +d \mid z \in \Fqm\}$ does not contain 0 iff $c=0$ (and $d\ne 0$), \ie $f$ is an affine map. By the existence of a Goppa code representation and by Proposition~\ref{prop: f_affine}, it follows that $(\xv,\yv)$ was already a Goppa code representation and nothing must be done.
	\item The Goppa code is not full-support. By computing the interpolation polynomial on the pairs $(x_i, \frac{1}{y_i})$'s we can check whether $(\xv,\yv)$ is an $r$-Goppa code representation. If not, Proposition~\ref{prop: f_affine} implies that a necessary condition for $(\xv',\yv')$ being an $r$-Goppa code representation is that $c\ne 0$. Without loss of generality, we can thus assume $c=1$, take the linear factor $\lambda=1$ and consider the candidate maps:
	\[ (\xv,\yv) \mapsto (\xv',\yv')=(\frac{a\xv+b}{\xv+d}, (\xv+d)^{r-1} \yv).\] 
	A simple method to get one good map is to choose at random $(a,b,d)$ and compute the degree of the interpolation polynomial for the pairs $(x_i', \frac{1}{y_i'})$'s until this is equal to $r$. The degrees of freedom for the affine map show that the probability of finding a sought map is
	\[
	\frac{(q^m-1) q^m}{(q^m-1) q^{2m}}=q^{-m},
	\]
	that is typically linear in the inverse of the length $n$. By quotienting the maps with respect to the group of affine maps, the worst case complexity becomes $q^m$ times the cost of interpolation polynomial that can be upper bounded by $n^2$. From our experiments, we observed that few more than $r$ pairs must be interpolated to determine whether the polynomial has degree equal to or larger than $r$.  
\end{itemize}

\section{Description of structured matrices in the code of quadratic relationships}
Recall that $\cA$ is the canonical basis of of $\left(\Alt{r}{\xv}{\yv}^\perp\right)_{\Fqm}$. We will study the matrix code $\Cmat(\cA)$, whose elements have a special structure and are easier to treat. This will be instrumental to describe a set of rank-2 matrices in the matrix code of quadratic relationships for a binary Goppa code of degree $r=2$, whose dimension is 3 as a variety:

\begin{restatable}{proposition}{variety} \label{prop: variety}
	Let $\Cmat$ be the matrix code of quadratic relationships corresponding to the dual of an $[n,n-2m]$ binary Goppa code in the extension field with a square-free Goppa polynomial of degree $2$. Let $\cP_2^+(\Mm)$ be the corresponding Pfaffian ideal as defined in \eqref{eq: pfaffian_ideal+} and where $\Mm$ is the generic skew-symmetric matrix of size $2m\times 2m$. Then $\dim \Vm(\cP_2^+(\Mm))\ge 3$.
\end{restatable}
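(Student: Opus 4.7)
The plan is to exhibit a concrete family of rank-$2$ skew-symmetric matrices inside $\Cmat(\cA)$ whose parameter space has dimension at least $3$; this immediately forces $\dim \Vm(\cP_2^+(\Mm)) \ge 3$ since $\cP_2^+(\Mm)$ cuts out precisely the rank-$\le 2$ locus in $\Cmat$.

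My starting ingredient is the defining Goppa identity $\Gamma(\xv) \star \yv = \mathbf{1}$. Writing $\Gamma(z) = z^2 + \alpha z + \beta$, squaring in characteristic $2$ and then multiplying by $\yv^2$ (equivalently, $\star$-multiplying the squared identity by $\yv^2$) produces the pure quadratic relation
\[
\vv_1 \star \vv_1 + \beta^2 \vv_3 \star \vv_3 + \alpha^2 \vv_3 \star \vv_4 + \vv_4 \star \vv_4 = 0
\]
among the elements of $\cA$, using that $\vv_1^{\star 2} = \vv_3$ and $\vv_2^{\star 2} = \vv_4$ in our basis. In characteristic $2$ all diagonal entries of $\Mm_\cv$ are annihilated, so only $m_{3,4} = m_{4,3} = \alpha^2$ survives: the associated matrix has rank exactly $2$. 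Iterating Frobenius yields further rank-$2$ matrices supported on the successive Frobenius-twisted pairs $(2i+3, 2i+4)$.

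These rank-$2$ matrices are isolated — linear combinations of two of them generically jump to rank $4$ — so on their own they contribute only a union of $1$-dimensional lines to $\Vm(\cP_2^+(\Mm))$. To upgrade to a genuinely $3$-dimensional family I would use Theorem~\ref{thm: binary_Goppa->Alt}: since $\Goppa{\xv}{\Gamma} = \Alt{4}{\xv}{\yv^2}$, the extended dual $(\Goppa{\xv}{\Gamma}^\perp)_{\Fqm}$ contains additional codewords such as $\xv \yv^2$ and $\xv^3 \yv^2$ which are simultaneously $\Fqm$-linear combinations of elements of $\cA$ and Schur products of basis elements (for instance $\xv \yv^2 = \vv_1 \star \vv_2$). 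Equating the two expressions produces mixed linear-quadratic relations; squaring them and combining with the basic relation above eliminates the linear terms and yields new pure quadratic relations from which I would extract additional rank-$2$ matrices organized in a continuous parametric family.

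The main obstacle is certifying that the family has dimension at least $3$ rather than collapsing. I would handle this by parametrizing rank-$2$ matrices via the factorization $\Mm = \uv \trsp{\wv} + \wv \trsp{\uv}$, translating membership in $\Cmat$ into the Schur-product identity
\[
\left(\sum_i u_i \vv_i\right) \star \left(\sum_j w_j \vv_j\right) = \sum_i u_i w_i \vv_i^{\star 2}
\]
in $\Fqm^n$. It then suffices to exhibit three algebraically independent continuous deformations of a well-chosen pair $(\uv_0, \wv_0)$ preserving this identity, and to verify by a Jacobian computation that the parametrization has rank $3$. Checking that these deformations are not secretly forced to be proportional by the Frobenius symmetry (which would reduce the effective dimension) will be the most delicate step.
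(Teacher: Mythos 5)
Your high-level strategy is the same as the paper's (exhibit a $\ge 3$-dimensional family of rank-$2$ matrices inside $\Cmat(\cA)$; congruence then transfers this to the public basis and gives $\dim \Vm(\cP_2^+(\Mm))\ge 3$), and your opening step is correct: squaring $\starp{\Gamma(\xv)}{\yv}=\mathbf{1}$ does produce a quadratic relation whose polar matrix is supported at the single pair $(3,4)$ with entry $\alpha^2=\gamma_1^2\ne 0$ (square-freeness $\iff \alpha\ne 0$) — this is exactly the paper's Type 3 matrix with $a=0,b=1$. But the mechanism you propose to upgrade from a union of lines to a $3$-dimensional family fails. In characteristic $2$ squaring is Frobenius-semilinear, so squaring the mixed relation $\starp{\vv_1}{\vv_2}=\sum_i\mu_i\vv_i$ kills every cross term and yields $\starp{\vv_3}{\vv_4}=\sum_i\mu_i^2\,\starp{\vv_i}{\vv_i}$; after polarization the diagonal terms vanish and you recover a scalar multiple of the \emph{same} $(3,4)$-supported matrix you already had. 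No new matrix direction can come out of this trick, so you remain stuck at dimension $1$. What actually produces the extra directions in the paper's proof are \emph{cross-Frobenius-block} relations (Type 5): multiplying the twisted identity $\Gamma(\xv)^{2^l}\yv^{2^l}=\mathbf{1}$ by $\xv^{2^l s}\,\yv^{2^l+2^{l+1}}$ and splitting each monomial between blocks $l$ and $l+1$, the right-hand side $\xv^{2^l s}\yv^{2^{l+1}}$ being a square of a dual codeword by Theorem~\ref{thm: binary_Goppa->Alt}. These matrices live in the off-diagonal block $(l+1,l)$ with entries built from the $\gamma_a$'s and are unreachable by squaring. The paper then takes the span of two Type 3 matrices (blocks $l$ and $l+1$) and two such Type 5 matrices: all four are supported in one $4\times 4$ principal block, so rank $\le 2$ on this $4$-dimensional span is the vanishing of a \emph{single} Pfaffian quadric $\lambda_1\lambda_2+\lambda_3^2\gamma_1\gamma_2+\lambda_3\lambda_4\gamma_1^2+\lambda_4^2\gamma_0\gamma_1$, a hypersurface of dimension $3$ — no deformation or Jacobian analysis is needed at all.

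Your fallback via the factorization $\Mm=\uv\trsp{\wv}+\wv\trsp{\uv}$ also contains a concrete error in the membership translation. In characteristic $2$ the kernel of polarization consists exactly of the diagonal quadratic forms, so the correct criterion is that $\starp{\left(\sum_i u_i\vv_i\right)}{\left(\sum_j w_j\vv_j\right)}$ lies in $\Fqmspan{\starp{\vv_1}{\vv_1},\dots,\starp{\vv_k}{\vv_k}}$, with \emph{free} coefficients (equivalently, that the product is the square of an arbitrary codeword), not with coefficients pinned to $u_iw_i$ as in your identity. With your pinned right-hand side even the basic Type 3 matrix fails: taking $\uv$ and $\wv$ to be the third and fourth standard basis vectors gives $\sum_i u_iw_i\starp{\vv_i}{\vv_i}=\zerov$ while $\starp{\vv_3}{\vv_4}=\xv^2\yv^4=\sqb{\xv\yv^2}\ne\zerov$. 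Your Jacobian computation would therefore be run on a strictly smaller variety that excludes precisely the solutions you need. In short: right strategy and a correct first matrix, but both mechanisms intended to produce the $3$-dimensional family break down, and the missing ingredient is the family of cross-block relations with Goppa-coefficient entries.
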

This proposition will be better explained and proved afterwards. 

We remark again that the following description is not limited to Goppa codes of degree $r=2$, but it also applies to alternant codes or higher degrees. Let us now consider the following block shape for an element of $\Am\in\Cmat(\cA)$

\[  
\Am= \begin{bmatrix}
	\Am_{0,0} & \trsp{\Am_{1,0}} & \cdots & \trsp{\Am_{m-1,0}} \\
	\Am_{1,0}& \Am_{1,1} & \cdots &  \trsp{\Am_{m-1,1}}\\
	\vdots & \vdots & \ddots & \vdots\\
	\Am_{m-1,0} & \Am_{m-1,1} & \cdots & \Am_{m-1,m-1} \\
\end{bmatrix},
\]
with $\Am_{l,l}\in \Sym(r, \Fqm)$, and give a description of the $r\times r$ blocks $\Am_{i,j}$'s below and intersecting the main diagonal, i.e. for $i\ge j$, for the matrix $\Am$ associated with a given algebraic relation. The blocks above the main diagonal can be easily obtained by transposition. In particular, we can identify identities that correspond to sparse and constant low-rank matrix codewords. We split such matrices into several types. 

\textbf{Type 1.} Let $a+b=2c$ and $0\le a<c<b\le r-1$. The quadratic relation
\[
\starp{(\xv^a \yv)^{q^l}}{(\xv^b \yv)^{q^l}}=\starp{(\xv^c \yv)^{q^l}}{(\xv^c \yv)^{q^l}}
\]
translates into $\starp{\av_a^{q^l}}{\av_b^{q^l}}-\starp{\av_c^{q^l}}{\av_c^{q^l}}=0$ with respect to the basis $\cA$ and thus it is associated with the matrix $\Am$ such that
	\[
	\Am_{l,l}=\begin{blockarray}{cccccc}
		& \textcolor{gray}{a} & \textcolor{gray}{c} & \textcolor{gray}{b} &  \\
		\begin{block}{[ccccc]c}
			&  &  &  &  &  \\
			\zerov &  &  & 1 &  & \textcolor{gray}{a} \\
			&  & -2 &  &  & \textcolor{gray}{c} \\
			& 1 &  &  & \zerov & \textcolor{gray}{b} \\
			&  &  &  & &  \\
		\end{block}
	\end{blockarray},\qquad \text{and }\Am_{i,j}=\zerov_{r\times r} \;\text{ for }\;  0\le j\le i\le m-1, (i,j)\ne (l,l).
	\] 
We remark that, in characteristic 2, $-2=0$, and thus $\Am_{l,l}$ has only two nonzero entries.
Therefore $\rank(\Am)=\rank(\Am_{l,l})=\begin{cases}
	3 & \text{if field characteristic $\ne 2$} \\ 2 & \text{if field characteristic $= 2$}\\ 
\end{cases}$.
We also observe that Type 1 matrices exist only for $r\ge 3$.

\textbf{Type 2.} Let $0\le a<c< d<b\le r-1$ and $a+b=c+d$. The quadratic relation
\[
\starp{(\xv^a \yv)^{q^l}}{(\xv^b \yv)^{q^l}}=\starp{(\xv^c \yv)^{q^l}}{(\xv^d \yv)^{q^l}}
\]
translates into $\starp{\av_a^{q^l}}{\av_b^{q^l}}-\starp{\av_c^{q^l}}{\av_d^{q^l}}=0$ with respect to the basis $\cA$ and thus it is associated with the matrix $\Am$ such that
\[
\Am_{l,l}=\begin{blockarray}{ccccccc}
	& \textcolor{gray}{a} & \textcolor{gray}{c} & \textcolor{gray}{d} & \textcolor{gray}{b}  & \\
	\begin{block}{[cccccc]c}
		&  &  &  &  & & \\
		\zerov & &  &  & 1 &  & \textcolor{gray}{a} \\
		& &  & -1 &  &  & \textcolor{gray}{c} \\
		& & -1  &  &  &  & \textcolor{gray}{d} \\
		& 1 &  & & & \zerov & \textcolor{gray}{b} \\
		& &  &  &  &  &  \\
	\end{block}
\end{blockarray},\qquad \text{and }\Am_{i,j}=\zerov_{r\times r}\; \text{ for }\;  0\le j\le i\le m-1, (i,j)\ne (l,l).
\] 
Therefore $\rank(\Am)=\rank(\Am_{l,l})=4$.
We observe that Type 1 matrices exist only for $r\ge 4$.
\begin{remark}
	For $a+b=c+d$ even, Type 2 matrices can be derived as linear combinations of two Type 1 matrices thanks to the equality
	\[
	\starp{\av_a^{2^l}}{\av_b^{2^l}}-\starp{\av_c^{2^l}}{\av_d^{2^l}}= (\starp{\av_a^{2^l}}{\av_b^{2^l}}-\starp{\av_{\frac{a+b}{2}}^{2^l}}{\av_{\frac{a+b}{2}}^{2^l}})-(\starp{\av_c^{2^l}}{\av_d^{2^l}}-\starp{\av_{\frac{c+d}{2}}^{2^l}}{\av_{\frac{c+d}{2}}^{2^l}})
	\]
	that holds under the conditions above.
	
	Moreover, we can obtain linear dependencies within the set of Type 2 matrices. Indeed let $a+b=c+d=e+f$ with $0\le a<c<e<f<d<b\le r-1$. Then the equality
	\[
	\starp{\av_a^{2^l}}{\av_b^{2^l}}-\starp{\av_c^{2^l}}{\av_d^{2^l}}= (\starp{\av_a^{2^l}}{\av_b^{2^l}}-\starp{\av_e^{2^l}}{\av_f^{2^l}})-(\starp{\av_c^{2^l}}{\av_d^{2^l}}-\starp{\av_e^{2^l}}{\av_f^{2^l}})
	\]
	induces a linear dependency on the matrices.
\end{remark}

\textbf{Type 3.} Let $q=2$ be the field size of a Goppa code with a square-free Goppa polynomial and $0\le a<b\le r-1$. In the quadratic relation
\[
\starp{(\xv^a \yv)^{2^l}}{(\xv^b \yv)^{2^l}}=\starp{(\xv^{a+b} \yv^2)^{2^{l-1}}}{(\xv^{a+b} \yv^2)^{2^{l-1}}},
\]
the term $\starp{(\xv^a \yv)^{2^l}}{(\xv^b \yv)^{2^l}}$ translates into $\starp{\av_a^{2^l}}{\av_b^{2^l}}$ with respect to the basis $\cA$, while $\starp{(\xv^{a+b} \yv^2)^{2^{l-1}}}{(\xv^{a+b} \yv^2)^{2^{l-1}}}$ is the square of a codeword in $\Goppa{\xv}{\Gamma}^\perp_{\Fqm}$. Indeed 
\[(\xv^{a+b} \yv^2)^{2^{l-1}}\in \GRS{2r}{\xv}{\yv^2}^{(2^{l-1})}\subseteq \Alt{2r}{\xv}{\yv^2}_{\Fqm}^\perp=\Alt{2r}{\xv}{1/\Gamma(\xv)^2}_{\Fqm}^\perp=\Goppa{\xv}{\Gamma}_{\Fqm}^\perp
\]
with the last equality following from Theorem~\ref{thm: binary_Goppa->Alt}. Therefore, as shown in \cite[Proposition~9]{CMT23}, this quadratic relation is associated with the matrix $\Am$ such that
\[
\Am_{l,l}=\begin{blockarray}{cccccc}
	& \textcolor{gray}{a} & & \textcolor{gray}{b} &  \\
	\begin{block}{[ccccc]c}
		&  &  &  &  &  \\
		\zerov &  &  & 1 &  & \textcolor{gray}{a} \\
		&  &  &  &  &  \\
		& 1 &  &  & \zerov & \textcolor{gray}{b} \\
		&  &  &  & &  \\
	\end{block}
\end{blockarray},\qquad \text{and }\Am_{i,j}=\zerov_{r\times r}\; \text{ for }\;  0\le j\le i\le m-1, (i,j)\ne (l,l).
\] 
Therefore $\rank(\Am)=\rank(\Am_{l,l})=2$.
\begin{remark}
	Type 1 and Type 3 matrices are the only ones described in \cite{CMT23} and are exactly those of rank $\le 3$. In the case of field characteristic 2, these are indeed the targets of the Pfaffian modeling. Note that, in the case of a square-free binary Goppa code, Type 1 matrices are included in Type 3 matrices and correspond to the choice $a+b$ being even. Moreover, Type 2 matrices can be obtained as linear combinations of two Type 3 matrices thanks to the equality
	\[
	\starp{\av_a^{2^l}}{\av_b^{2^l}}-\starp{\av_c^{2^l}}{\av_d^{2^l}}= (\starp{\av_a^{2^l}}{\av_b^{2^l}}-\starp{\av_{a+b}^{2^{l-1}}}{\av_{a+b}^{2^{l-1}}})-(\starp{\av_c^{2^l}}{\av_d^{2^l}}-\starp{\av_{c+d}^{2^{l-1}}}{\av_{c+d}^{2^{l-1}}})=0
	\]
	that holds whenever $a+b=c+d$.
\end{remark}

\textbf{Type 4.} Let $a<c$, $b>d$, $(u-l) \mod m \le e_{\AC}\eqdef \floor{\log_q(r+1)}$ and $aq^l+bq^u=cq^l+dq^u$. The quadratic relation
\[
\starp{(\xv^a \yv)^{q^l}}{(\xv^b \yv)^{q^u}}=\starp{(\xv^c \yv)^{q^l}}{(\xv^d \yv)^{q^u}}
\]
translates into $\starp{\av_a^{q^l}}{\av_b^{q^u}}-\starp{\av_c^{q^l}}{\av_d^{q^u}}=0$ with respect to the basis $\cA$ and thus it is associated with the matrix $\Am$ such that
\begin{itemize}
	\item if $u>l$:
	\[
	\Am_{u,l}=\begin{blockarray}{cccccc}
		& \textcolor{gray}{a} & & \textcolor{gray}{c} &  \\
		\begin{block}{[ccccc]c}
			&  &  &  &  &  \\
			\zerov &  &  & -1 &  & \textcolor{gray}{d} \\
			& 1 &  &  &  & \textcolor{gray}{b} \\
			&  &  &  & \zerov &  \\
			&  &  &  & &  \\
		\end{block}
	\end{blockarray},\qquad \text{and }\Am_{i,j}=\zerov_{r\times r}\; \text{ for }\;  0\le j\le i\le m-1, (i,j)\ne (u,l).
	\] 
	\item if $l>u$:
	\[
	\trsp{\Am_{u,l}}=\begin{blockarray}{cccccc}
		& \textcolor{gray}{a} & & \textcolor{gray}{c} &  \\
		\begin{block}{[ccccc]c}
			&  &  &  &  &  \\
			\zerov &  &  & -1 &  & \textcolor{gray}{d} \\
			& 1 &  &  &  & \textcolor{gray}{b} \\
			&  &  &  & \zerov &  \\
			&  &  &  & &  \\
		\end{block}
	\end{blockarray},\qquad \text{and }\Am_{i,j}=\zerov_{r\times r}\; \text{ for }\;  0\le j\le i\le m-1, (i,j)\ne (l,u).
	\]
\end{itemize}
Therefore $\rank(\Am)=\rank(\Am_{u,l})+\rank(\trsp{\Am_{u,l}})=4$. Differently from the previous types, these matrices are not block diagonal.
\begin{remark} \label{remark: lindep4} We can obtain linear dependencies within the set of Type 4 matrices. Indeed let $aq^l+bq^u=cq^l+dq^u=eq^l+fq^u$ with $a<c<e,f<d<b$. Then the equality
	\[
	\starp{\av_a^{q^l}}{\av_b^{q^u}}-\starp{\av_c^{q^l}}{\av_d^{q^u}}= (\starp{\av_a^{q^l}}{\av_b^{q^u}}-\starp{\av_e^{q^l}}{\av_f^{q^u}})-(\starp{\av_c^{q^l}}{\av_d^{q^u}}-\starp{\av_e^{q^l}}{\av_f^{q^u}})
	\]
	induces a linear dependency on the matrices.
\end{remark}
We also observe that Type 4 matrices exist only starting from $r\ge 3$.

\textbf{Type 5.} Let $q=2$ be the field size of a Goppa code with a square-free Goppa polynomial $\Gamma(z)=\sum_{a=0}^r \gamma_a z^a$. Fix $u,l\in \Iintv{0}{m-1}$ such that $1\le (u-l) \mod m \le v\eqdef\floor{\log_2 r}$ and let $s\eqdef 2^{(u-l-1) \mod m}w$, for some $w \in \Iintv{0}{2r-2}$. For all $a \in \Iintv{0}{r-1}$, we define 
\[B_a\eqdef \Iintv{\ceil{\frac{a+s-r+1}{2^{(u-l)\mod m}}}}{\floor{\frac{a+s}{2^{(u-l)\mod m}}}}\cap \Iintv{0}{r-1}.\] 
For all $a\in\Iintv{0}{r}$, let us also define $c_{(a,b)}=a+s-2^{(u-l)\mod m}b$ and $\gamma_{c_{(a,b)},b}\in \Fqm$, $b\in B_a$, such that $\sum_{b\in B_a} \gamma_{c_{(a,b)},b}=\gamma_a$. In the quadratic relation
\begin{align*}
	&\sum_{a=0}^r \left(\sum_{b \in B_a} \gamma_{c_{(a,b)},b}^{2^l} \starp{(\xv^{c_{(a,b)}} \yv)^{2^l}}{(\xv^b \yv)^{2^u}} \right)\\ = &\sum_{a=0}^r \left(\sum_{b \in B_a} \gamma_{c_{(a,b)},b} \xv^{a}\right)^{2^l} \xv^{ 2^l s} \yv^{2^u+2^l} \\ = & \xv^{ 2^l s}\yv^{2^u}\\= &\starp{(\xv^w \yv^2)^{2^{(u-2)\mod m}}}{(\xv^w \yv^2)^{2^{(u-2)\mod m}}},
\end{align*}
the term $\sum_{a=0}^r \left(\sum_{b \in B_a} \gamma_{c_{(a,b)},b}^{2^l} \starp{(\xv^{c_{(a,b)}} \yv)^{2^l}}{(\xv^b \yv)^{2^u}} \right)$ translates into\\ $\sum_{a=0}^r \left(\sum_{b \in B_a} \gamma_{c_{(a,b)},b}^{2^l} \starp{\av_{c_{(a,b)}} ^{2^l}}{\av_b^{2^u}} \right)$ with respect to the basis $\cA$, while\\ $\starp{(\xv^w \yv^2)^{2^{(u-2)\mod m}}}{(\xv^w \yv^2)^{2^{(u-2)\mod m}}}$ is the square of a codeword in $\Goppa{\xv}{\Gamma}^\perp_{\Fqm}$ because of Theorem~\ref{thm: binary_Goppa->Alt}. Therefore, analogously to Type 3 matrices, whenever $\forall a \in \Iintv{0}{r}, B_a \ne \emptyset$, this quadratic relation is associated with the matrix $\Am$ such that
\begin{itemize}
	\item if $u>l$:
	\[
	\Am_{u,l}=\begin{blockarray}{cccc}
		& \textcolor{gray}{i} &  \\
		\begin{block}{[ccc]c}
			\ddots & \vdots & \iddots & \\
			\cdots & \gamma_{i,b}^{2^l} & \cdots & \textcolor{gray}{b} \\
			\iddots & \vdots & \ddots & \\
		\end{block}
	\end{blockarray},\qquad \text{and }\Am_{i,j}=\zerov_{r\times r}\; \text{ for }\;  0\le j\le i\le m-1, (i,j)\ne (u,l).
	\] 
	\item if $l>u$:
	\[
	\trsp{\Am_{u,l}}=\begin{blockarray}{cccc}
		& \textcolor{gray}{i} &  \\
		\begin{block}{[ccc]c}
			\ddots & \vdots & \iddots & \\
			\cdots & \gamma_{i,b}^{2^l} & \cdots & \textcolor{gray}{b} \\
			\iddots & \vdots & \ddots & \\
		\end{block}
	\end{blockarray},\qquad \text{and }\Am_{i,j}=\zerov_{r\times r}\; \text{ for }\;  0\le j\le i\le m-1, (i,j)\ne (l,u).
	\]
	where $\gamma_{i,b}$ is set to 0 for any $i\not\in \{c_{(a,b)} \mid a \in \Iintv{0}{r}\}$.
\end{itemize}
This is the most articulated type of matrix, and in general it is not straightforward to calculate its rank. Since $B_a$ must be nonempty for every $a\in \Iintv{0}{r}$ and $\card{\Iintv{0}{r}}=r+1>r$, at least two rows of $\Am_{u,l}$ are nonzero and it can be verified that they cannot all be linearly dependent. This implies that $\rank(\Am)=2\rank(\Am_{u,l})\ge 4$. However, for any $r,u,l$ and for any admissible choice of $s$, we can choose the $\gamma_{c_{(a,b)},b}$'s in such a way that only two consecutive rows are nonzero. In particular, we can determine sparse matrices (with only $2(r+1)$ nonzero entries) of rank 4. We provide a couple of examples below.
\begin{example}
	Let $r=6$, $u=1$, $l=0$ and let us take $s=1$. The matrix $\Am$ such that
	\[
	\Am_{1,0}=\begin{bmatrix}
		0 & \gamma_0 & \gamma_1 & \gamma_2 & \gamma_3 & \gamma_4 \\
		0 & 0 & 0 & 0 & \gamma_5 & \gamma_6  \\
		0 & 0 & 0 & 0 & 0 & 0 \\ 		0 & 0 & 0 & 0 & 0 & 0 \\ 		0 & 0 & 0 & 0 & 0 & 0 \\ 		0 & 0 & 0 & 0 & 0 & 0 \\
	\end{bmatrix}
	\] 
	and $\Am_{i,j}=\zerov_{r\times r}$ for $0\le j\le i\le m-1, (i,j)\ne (u,l)$, belongs to $\Cmat(\cA)$ and has rank 4.
\end{example}
\begin{example}	
	Let $r=6$, $u=2$, $l=0$ and let us take $s=2\cdot 5=10$. The matrix $\Am$ such that
	\[
	\Am_{3,1}=\begin{bmatrix}
		0 & 0 & 0 & 0 & 0 & 0 \\ 		0 & 0 & 0 & 0 & 0 & 0 \\
		\gamma_0^2 & \gamma_1^2 & \gamma_2^2 & \gamma_3^2  & 0 & 0 \\
		\gamma_4^2 & \gamma_5^2 & \gamma_6^2 & 0 & 0 & 0 \\		0 & 0 & 0 & 0 & 0 & 0 \\ 		0 & 0 & 0 & 0 & 0 & 0 \\
	\end{bmatrix}
	\] 
	and $\Am_{i,j}=\zerov_{r\times r}$ for $0\le j\le i\le m-1, (i,j)\ne (u,l)$, belongs to $\Cmat(\cA)$ and has rank 4.
\end{example}

\subsection*{The Pfaffian variety with respect to a square-free Goppa polynomial of degree $r=2$}

Let $\F$ be a finite field of characteristic 2, $\CC\subset \F^n$ an $[n,s]$ linear code with basis $\Cmat(\cV)$, $\Mm=(m_{i,j})\in\Skew(s,\F)$ the generic skew-symmetric matrix (\ie all $\binom{s}{2}$ entries below the main diagonal are seen as independent variables) and $\mv$ the vector of length $\binom{s}{2}$ containing all the independent variables corresponding to the entries of $\Mm$. We recall here the definition of the generic Pfaffian ideal $\cP_2(\Mm)\in \F[\mv]$ for rank 2.
\begin{definition}[Pfaffian ideal for rank 2]
	The \textit{Pfaffian ideal} of rank 2 for $\Mm$ in characteristic 2 is
	\begin{equation} \label{eq: pfaffian_ideal}
		\cP_2(\Mm) \eqdef \langle m_{i,j}m_{k,l}+m_{i,k}m_{j,l}+m_{i,l}m_{j,k} \mid 1\le i<j<k<l\le s \rangle.
	\end{equation}
\end{definition}
The Hilbert series of the quotient $\F[\mv]/\cP_2(\Mm)$ is known \cite{GK04} and the dimension of the associated variety is $2s-3$.

In \cite{CMT23}, the Pfaffian ideal $\cP_2^+(\Mm)\in \F[\mv]$ for rank 2 with respect to the matrix code $\Cmat(\cV)$ has been defined. This can be written as
\begin{equation} \label{eq: pfaffian_ideal+}
\cP_2^+(\Mm)=\cP_2(\Mm)+\langle L_1(\mv),\dots,L_t(\mv)\rangle,
	\end{equation}
where the $t= \binom{s}{2}-\dim_{\F} \Cmat(\cV)$ linearly independent linear polynomials $L_i$'s in $\mv$ express the fact that $\Mm$ belongs to the matrix subspace $\Cmat(\cV)$. This ideal was introduced because, by evaluating the Hilbert function at a high enough degree, alternant and Goppa codes can be distinguished from random codes. In particular, if $\CC\subset \Fqm^n$ is a random $[n,rm]$ code and $n>3rm-3$, then with high-probability the variety $\Vm(\cP_2^+(\Mm))=\{\zerov\}$. On the other hand, if $\CC=\Goppa{\xv}{\Gamma}^\perp_{\F_2^m}$, where $\Goppa{\xv}{\Gamma}$ is an $[n,n-rm]$ binary Goppa code with a square-free Goppa polynomial defined from a field extension of degree $m$ and $r\ge 3$, then \cite[Proposition~16]{CMT23} shows that 
\begin{equation} \label{eq: dimVge2r-3}
	\dim \Vm(\cP_2^+(\Mm))\ge 2r-3.
\end{equation}
The reason why the result does not include the case $r=2$ is strongly connected to the distinguishability of the GRS code $\GRS{r}{\xv}{\yv}$. Indeed the smallest value for which two Schur's products of different pairs of codewords in $\GRS{r}{\xv}{\yv}$ coincide is 3: $\starp{(\xv^2 \yv)}{(\yv)}=\sqb{\xv \yv}$. Intuitively, this is reasonable: the vectors $\xv$ and $\yv$ are a compact representation of the GRS code, but any 2-dimensional code is determined by two linearly independent codewords. This fact is inherited when considering subfield subcodes of GRS codes: the alternant code $\Alt{2}{\xv}{\yv}$ is not distinguishable from random codes in general. However, a binary Goppa code of degree 2 with a square-free Goppa polynomial remains in principle distinguishable and $\Vm(\cP_2^+(\Mm))\supsetneq \{0\}$. This is obvious from the existence of rank-2 matrices in $\Cmat(\cA)$. Indeed, by fixing $l$ and choosing $a=0, b=1$ in the Type 3 construction, we obtain the matrix $\Am \in \Cmat(\cA)$ such that
\[
\Am_{l,l}=\begin{bmatrix} 0 & 1\\ 1& 0 \end{bmatrix},\qquad \qquad \Am_{i,j}=\zerov_{2\times 2}\quad  \text{ otherwise.}
\] 
This immediately implies that $\dim \Vm(\cP_2^+(\Mm))\ge 1$ getting back the lower bound in \eqref{eq: dimVge2r-3} for $r=2$ too. We recall and finally prove Proposition~\ref{prop: variety}, which shows that in this case the lower bound on the dimension of the variety is not tight. 
\variety*
\begin{proof}
	We consider a subspace of $\Cmat(\cA)$ spanned by some of the matrices presented in Section 2. Let us fix some $l\in \Iintv{0}{m-1}$ and take
	\begin{itemize}
		\item the Type 3 matrix obtained by taking $a=0,b=1$ and $l$;
		\item the Type 3 matrix obtained by taking $a=0,b=1$ and $l+1$;
		\item the Type 5 matrices obtained by taking $s=0,u=l+1,\gamma_{c_{(0,0)},0}=\gamma_0, \gamma_{c_{(1,0)},0}=\gamma_1$, $\gamma_{c_{(2,1)},1}=\gamma_2$ and $\gamma_{i,b}=0$ otherwise;
		\item the Type 5 matrices obtained by taking $s=1,u=l+1,\gamma_{c_{(0,0)},0}=\gamma_0, \gamma_{c_{(1,1)},1}=\gamma_1$, $\gamma_{c_{(2,1)},1}=\gamma_2$ and $\gamma_{i,b}=0$ otherwise.
	\end{itemize}
	We fix $l=0$ for simplicity, but the same argument works for any other value in $\Iintv{0}{m-1}$. Let $\lambda=(\lambda_1,\dots,\lambda_4)\in \Fqm^4$ be the vectors of coefficients in the linear combination of the 4 matrices above. We thus obtain the parametrized matrix
	\[
	\Am(\lambda)=\begin{bmatrix} 
		0 & \lambda_1 & \lambda_3\gamma_0 & \lambda_3\gamma_2+\lambda_4\gamma_1 & & &  \\
		\lambda_1 & 0 &  \lambda_3\gamma_1+\lambda_4\gamma_0 & \lambda_4\gamma_2 & & &\\
		\lambda_3\gamma_0 & \lambda_3\gamma_1+\lambda_4\gamma_0 & 0 & \lambda_2 & & \zerov_{4\times 2m-4} & \\
		\lambda_3\gamma_2+\lambda_4\gamma_1 & \lambda_4\gamma_2 & \lambda_2 & 0 & & & \\
		& & & & & & \\
		& \zerov_{2m-4\times 4} & & & & \zerov_{2m-4\times 2m-4} & \\
		& & & & & & \\
	\end{bmatrix}.
	\]
	The only possible nonzero $4\times 4$ minor is the top left one. This is a principal minor, thus the submatrix is skew-symmetric and its determinant is the square of a polynomial in its entries called \textit{pfaffian}, which we denote with $\pf$. A straightforward computation gives
	\begin{align*}
		&\det\left(\begin{bmatrix} 
			0 & \lambda_1 & \lambda_3\gamma_0 & \lambda_3\gamma_2+\lambda_4\gamma_1  \\
			\lambda_1 & 0 &  \lambda_3\gamma_1+\lambda_4\gamma_0 & \lambda_4\gamma_2\\
			\lambda_3\gamma_0 & \lambda_3\gamma_1+\lambda_4\gamma_0 & 0 & \lambda_2\\
			\lambda_3\gamma_2+\lambda_4\gamma_1 & \lambda_4\gamma_2 & \lambda_2 & 0
		\end{bmatrix}\right)\\=& \pf^2\left(\begin{bmatrix} 
			0 & \lambda_1 & \lambda_3\gamma_0 & \lambda_3\gamma_2+\lambda_4\gamma_1  \\
			\lambda_1 & 0 &  \lambda_3\gamma_1+\lambda_4\gamma_0 & \lambda_4\gamma_2\\
			\lambda_3\gamma_0 & \lambda_3\gamma_1+\lambda_4\gamma_0 & 0 & \lambda_2\\
			\lambda_3\gamma_2+\lambda_4\gamma_1 & \lambda_4\gamma_2 & \lambda_2 & 0
		\end{bmatrix}\right)\\=&(\lambda_1 \lambda_2+\lambda_3\lambda_4\gamma_0\gamma_2+\lambda_3^2\gamma_1\gamma_2+\lambda_3\lambda_4 \gamma_1^2+\lambda_3\lambda_4 \gamma_0\gamma_2+\lambda_4^2\gamma_0\gamma_1)^2\\
		=&(\lambda_1 \lambda_2+\lambda_3^2\gamma_1\gamma_2+\lambda_3\lambda_4 \gamma_1^2+\lambda_4^2\gamma_0\gamma_1)^2.\\
	\end{align*}
	We want to study for which $\lambda$ the pfaffian $\lambda_1 \lambda_2+\lambda_3^2\gamma_1\gamma_2+\lambda_3\lambda_4 \gamma_1^2+\lambda_4^2\gamma_0\gamma_1$ equals 0. For any free choice of $\lambda_3,\lambda_4$ two cases may occur:
	\begin{itemize}
		\item if $\lambda_3^2\gamma_1\gamma_2+\lambda_3\lambda_4 \gamma_1^2+\lambda_4^2\gamma_0\gamma_1=0$, then the equality is obtained by fixing one between $\lambda_1$ or $\lambda_2$ to 0 and the other can take any value over $\Fqm$;
		\item if $\lambda_3^2\gamma_1\gamma_2+\lambda_3\lambda_4 \gamma_1^2+\lambda_4^2\gamma_0\gamma_1\ne0$, then the equality is obtained by taking any $\lambda_1 \ne 0$ and $\lambda_2=\lambda_1^{-1}$.
	\end{itemize}
	Therefore, there are 3 degrees of freedom in the choice of $\lambda$ and since all the corresponding matrices belong to $\Cmat(\cA)$, this implies that $\dim \Vm(\cP_2^+(\Mm))\ge 3$.
\end{proof}

The previous proposition shows that we can fix 3 variables in the Pfaffian system and still expect to have non-trivial solutions, \ie rank-2 matrices in $\Cmat(\cB)$, with overwhelming probability in the specialized non-homogeneous system. 

\section{Retrieving the basis of a GRS code}
Analogously to the previous section, we focus here on the case of a binary Goppa code with a square-free Goppa polynomial of degree 2 and adapt the attack from  \cite[Section 6]{CMT23} to this setting.
We first recall the setting of the attack proposed in \cite{CMT23} for distinguishable parameters with respect to the notion of distinguishability given in \cite{FGOPT11, MT23}. In this framework, under the condition that $3\le r<q+1$ ($3\le r<q-1$ respectively) the matrix code $\Cmat(\cA)$ originated by a generic alternant code (generic Goppa code respectively) contains only block diagonal elements (with $m$ blocks of size $r\times r$). This very special shape allowed to recover a basis of the alternant/Goppa code, by sampling almost full-rank matrices in the code and gaining information about codewords lying in a single GRS code from the kernels of these matrices. 

However, the conditions are not satisfied here for several reasons:
\begin{itemize}
	\item the Goppa code is not distinguishable in the sense of \cite{FGOPT11, MT23};
	\item the Goppa polynomial degree $r=2<3$;
	\item for Goppa codes it was required that $r<q-1$, but here $r=q=2$.
\end{itemize}

We thus follow an alternative approach. Suppose we are able to sample rank-2 matrices in $\Cmat(\cB)$ by solving the Pfaffian system specialized into $3$ variables. In this section, we describe how to recover a basis of one of the $m$ codes $\GRS{2}{\xv^{q^j}}{\yv^{q^j}}$. The algorithm is therefore an adaptation of \cite[Section 6]{CMT23}, where kernels are computed from low-rank matrices instead of almost full-rank ones.

A set of rank-2 matrices in $\Cmat(\cA)$ has been exhibited in the previous section. Below the Gilbert-Varshamov distance computed with respect to the space of skew-symmetric matrices of size $rm$, all the solutions of the Pfaffian system are expected to have the structure described in the proof of Proposition~\ref{prop: variety}, or the analogous one for another fixed value of $l\in \Iintv{0}{m-1}$, with very high probability. We will exploit that any rank-2 matrix in the code $\Cmat(\cA)$ has this structure. We recall from \cite[Proposition 10]{CMT23} that the parameters above the Gilbert-Varshamov distance correspond to those for which
\[
n>3m-3.
\]

Therefore, the attack we will now present works for binary Goppa codes with a square-free Goppa polynomial of degree 2 and is based on the following standard assumption that we have extensively verified through experiments (on both TII challenges and randomly generated instances):
\begin{assumption} \label{assumption: rank2}
	Let $\Goppa{\xv}{\Gamma}$ be a binary $[n,n-2m]$ Goppa code with a square-free Goppa polynomial $\Gamma$ of degree $2$ and let $n>3m-3$. Let $\cA$ be the canonical basis of $\Goppa{\xv}{\Gamma}^\perp_{\Fqm}$ as given in \eqref{eq: basisA}. Then, for $n\to\infty$, a rank 2 matrix in $\Am\in\Cmat(\cA)$ is such that only 4 blocks are non-zero, in particular
	\[\exists l\in\Iintv{0}{m-1} \text{ s.t. } \forall (u,v)\notin\{(l,l),(l+1,l), (l+1,l+1)\}\mod m,\quad\Am_{u,v}=\zerov_{r\times r},
	\]
	\ie
	\begin{equation*} 
		\Am=\begin{bmatrix}
			\zerov_{r\times r} & & & & & \zerov_{r\times r}\\
			& \ddots & & & \iddots & \\
			& & \Am_{l-1,l-1} & \trsp{\Am_{l, l-1}} & & \\
			& & \Am_{l,l-1} & \Am_{l, l} & & \\
			& \iddots & & & \ddots & \\
			\zerov_{r\times r} & & & & & \zerov_{r\times r}\\
		\end{bmatrix}.
	\end{equation*}
	with probability $1-o(1)$.
\end{assumption}

In particular the shape of $\Am$ in Assumption~\ref{assumption: rank2} is the same (minus diagonal shift) as the one of $\Am(\lambda)$ in Proposition~\ref{prop: variety} for an admissible value of the parameter $\lambda$ that makes $\Am(\lambda)$ of rank 2.

Algorithm~\ref{alg: attack} provides a sketch of the attack.

\begin{algorithm}[t]
	\begin{algorithmic}[1]
		\State Choose a basis $\cB=(\bv_1, \dots, \bv_r, \bv_1^q, \dots, \bv_r^q, \dots, \bv_1^{q^{m-1}},\dots, \bv_r^{q^{m-1}})$ for $\ext{\Goppa{\xv}{\Gamma}^\perp}{ \Fqm}$.
		\Repeat 
		\State Find two matrices $\Bm_1,\Bm_2\in \Cmat(\cB)$ of rank $2$
		\State $K_1 \gets \ker(\Bm_1)$
		\State $K_2 \gets \ker(\Bm_2)$
		\State $\VC\gets K_1\cap K_2$
		\State $\VC\gets \VC+\VC^{(q)}\Sm$ \quad \Comment \text{$\Sm$ is defined as in \eqref{eq: matF}}
		\State $\mathbf{GRS}\gets\VC^\perp \cdot \Hm_{\cB}$ \quad \Comment \text{$\Hm_{\cB}$ is defined as in \eqref{eq: Hm}}
		\Until $\dim_{\Fqm} \mathbf{GRS}=2$ and $\mathbf{GRS}\subsetneq \Fq^n$
		\State Apply the Sidelnikov-Shestakov attack on $\mathbf{GRS}$
	\end{algorithmic} \caption{Sketch of the attack for binary square-free Goppa codes of order 2}\label{alg: attack}
\end{algorithm}

It succeeds if at some iteration of the repeat cycle the code $\mathbf{GRS}$ is one of the $m$ GRS codes of which $\Goppa{\xv}{\Gamma}$ is a subfield subcode. Indeed, in this case, the well-known Sidelnikov-Shestakov attack allows to retrieve a valid pair of support and multiplier for $\mathbf{GRS}$ that also defines an alternant code that coincides with $\Goppa{\xv}{\Gamma}$. 

Hence the key recovery of Algorithm~\ref{alg: attack} follows on the spot from the next proposition.
\begin{proposition} \label{prop: correctness}
	Consider an iteration of the \textit{repeat/until} loop of Algorithm~\ref{alg: attack}. For $n\to\infty$ and under Assumption~\ref{assumption: rank2}, the equation 
	\[(\VC+\VC^{(q)}\Sm)^\perp\cdot \Hm_{\cB}= \GRS{2}{\xv}{\yv}^{q^l}\]
	holds with probability $\frac{1}{m}(1-o(1))$ at least.
\end{proposition}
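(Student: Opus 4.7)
The plan is to transport the four operations of lines~4--8 of Algorithm~\ref{alg: attack} into the canonical basis $\cA$ of Section~4, where the rank-$2$ elements of $\Cmat$ admit the fully explicit block description given there. By Proposition~\ref{prop: congr_odd} the matrix codes $\Cmat(\cB)$ and $\Cmat(\cA)$ are congruent via the change-of-basis matrix between $\cB$ and $\cA$, and this congruence transports kernels, the Frobenius action (through the defining property of $\Sm$ in \eqref{eq: matF}) and the multiplication by $\Hm_\cB$ (through its definition in \eqref{eq: Hm}) in a compatible way, so it suffices to establish the stated equality in $\cA$-coordinates.

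\emph{Shape uniformity.} Under Assumption~\ref{assumption: rank2}, every rank-$2$ matrix of $\Cmat(\cA)$ has its support confined, for some $l\in\Iintv{0}{m-1}$, to the three blocks $(l,l),(l+1,l),(l+1,l+1)$, and is up to a scalar an element of the three-parameter family $\Am(\lambda)$ constructed in the proof of Proposition~\ref{prop: variety}. Because this family is cyclically permuted by Frobenius and looks identical from one shape $l$ to the next, the shape $l_i$ of a uniformly sampled rank-$2$ matrix in $\Cmat(\cB)$ is asymptotically uniform on $\Iintv{0}{m-1}$; hence $\Pr[l_1=l_2]\ge \tfrac{1}{m}(1-o(1))$, which is the source of the $1/m$ factor. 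We condition on $l_1=l_2=l$ in the sequel.

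\emph{Kernel intersection and Frobenius extension.} The matrix $\Am(\lambda)$ has rank $2$ and acts non-trivially only on the four-dimensional window $W_l\subseteq\Fqm^{2m}$ spanned by the coordinates of blocks $l$ and $l+1$; its kernel splits as the direct sum of the $(2m-4)$-dimensional subspace $K_0$ of vectors supported outside $W_l$ and a two-dimensional ``active'' subspace sitting inside $W_l$. A direct inspection of the explicit $4\times 4$ block of $\Am(\lambda)$ given in the proof of Proposition~\ref{prop: variety} shows that already the parameters $(\lambda_3,\lambda_4)$ yield a two-parameter family of distinct active subspaces, so two generic samples $\lambda^{(1)},\lambda^{(2)}$ produce transverse active subspaces (dimension count $2+2-4=0$) with probability $1-o(1)$. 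In that case $\VC = K_1\cap K_2 = K_0$, exactly the subspace of $\cA$-vectors supported outside blocks $l$ and $l+1$. The Frobenius acts on $\cA$ by the block shift $(\xv^a\yv)^{q^j}\mapsto(\xv^a\yv)^{q^{j+1}}$, so $\VC^{(q)}\Sm$ corresponds, in $\cA$-coordinates, to the subspace of vectors supported outside blocks $l+1$ and $l+2$, and the sum
\[
\VC + \VC^{(q)}\Sm
\]
is the codimension-$2$ subspace of $\cA$-vectors supported outside block $l+1$.

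\emph{Dualization.} The orthogonal of the latter subspace is the two-dimensional span of the two $\cA$-coordinates of block $l+1$; pushing it through $\Hm_\cB$ (equivalently, through $\Hm_\cA$ via the basis-change identification of the preamble) returns the $\Fqm$-span of the codewords $\yv^{q^{l+1}}$ and $(\xv\yv)^{q^{l+1}}$, which is precisely $\GRS{2}{\xv}{\yv}^{q^{l+1}}$. After the harmless relabelling $l\leftarrow l+1$ this is the announced equality, and combining the $\tfrac{1}{m}(1-o(1))$ probability from the shape-uniformity step with the $1-o(1)$ transversality probability yields the stated lower bound. The delicate ingredient is the transversality assertion: one must rule out that the $3$-dimensional pfaffian variety of Proposition~\ref{prop: variety} parametrizes too few active kernels. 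This is a finite-dimensional genericity check that follows from the explicit $4\times 4$ matrix of the proof of Proposition~\ref{prop: variety}, using $\gamma_0,\gamma_1,\gamma_2\ne 0$ to see that $(\lambda_3,\lambda_4)\mapsto\ker\Am(\lambda)\cap W_l$ already has a two-dimensional image, so pairs with coinciding active kernels lie in a proper Zariski-closed subset of the sampling space.
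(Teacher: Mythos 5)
Your proposal is correct and follows essentially the same route as the paper's proof: transport everything to the canonical basis $\cA$ via the congruence of Proposition~\ref{prop: congr_odd}, match the shapes of the two sampled matrices (the source of the $\frac{1}{m}$ factor), show $\VC=K_1\cap K_2$ kills exactly two consecutive blocks (your transversality of active subspaces is the paper's assertion that $\dim_{\Fqm}(\ker\Bm_1\cap\ker\Bm_2)=rm-4$ combined with Lemma~\ref{prop: ker_sum_GRS}), shift by Frobenius via $\Sm$ as in Lemmata~\ref{prop: stable_frobenius} and~\ref{lemma: Vq}, and conclude by dualizing with a dimension count. The only point you state rather than verify --- that kernels, Frobenius, and multiplication by $\Hm_{\cB}$ transport compatibly under the congruence, in particular that $(\VC+\VC^{(q)}\Sm)^\perp$ maps through $\Pm$ (not $\trsp{(\Pm^{-1})}$) onto vectors supported on the single surviving block --- is exactly the short explicit pairing computation $\langle \vv\trsp{(\Pm^{-1})},\vv^\perp\Pm\rangle=0$ the paper carries out, so it is compression rather than a gap.
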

In the next subsection, we provide a proof of Proposition~\ref{prop: correctness} that is therefore a proof of correctness for Algorithm~\ref{alg: attack} under Assumption~\ref{assumption: rank2}.

\subsection{Explanation of Algorithm~\ref{alg: attack} under Assumption~\ref{assumption: rank2}}
Algorithm~\ref{alg: attack} starts by computing a basis $\cB$ of $\ext{\Goppa{\xv}{\Gamma}^\perp}{ \Fqm}$ with a special shape:
\begin{equation} \label{eq: AltBasisFrobenius}
	\mathcal{B} = (\bv_1, \dots, \bv_r, \bv_1^q, \dots, \bv_r^q, \dots, \bv_1^{q^{m-1}},\dots, \bv_r^{q^{m-1}}).
\end{equation}
An efficient procedure to obtain a basis like this has already been explained in \cite{CMT23}. 
Given a basis $\cV=(\vv_1,\dots,\vv_k)$ of an $[n,k]$ code (for instance $\cA$ or $\cB$) we denote by $\Hm_{\cV}$ the $k\times n$ matrix whose rows are the elements of $\cV$ in the same order:
\begin{equation} \label{eq: Hm}
\Hm_{\cV}\eqdef \begin{pmatrix} \vv_1 \\ \vdots \\ \vv_k \end{pmatrix}.
\end{equation}

Let us define the right $r$-cyclic shift matrix  $\Sm\in \mathbf{GL}_{mr}(\Fqm)$ as
\begin{equation} \label{eq: matF}
	\Sm \eqdef
	\begin{pmatrix}
		& \mat{I}_r &  & & \\
		&& \mat{I}_r & &\zerov \\
		&&\zerov & \ddots & \\
		&& & & \mat{I}_r\\
		\mat{I}_r &  &  && \\
	\end{pmatrix}.
\end{equation}
Note that $\Sm^{-1}=\trsp{\Sm}$ is the left $r$-cyclic shift matrix. 
We first recall, without proving it, a preliminary result from \cite{CMT23} that also comes in handy here. 
\begin{lemma} \label{prop: stable_frobenius}
	Whenever a basis $\cB$ has the form given in \eqref{eq: AltBasisFrobenius}, 
	$\Cmat (\mathcal B)$ is stable
	by the operation
	\[
	\Mm \longmapsto \trsp{\Sm} \Mm^{(q)} \Sm.
	\]
\end{lemma}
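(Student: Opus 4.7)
My approach is to show that the map $\Mm\mapsto\trsp{\Sm}\Mm^{(q)}\Sm$ is the matricial shadow of applying the coordinate-wise Frobenius to a quadratic relation satisfied by the Schur products $\bv'_a\star\bv'_b$, where $(\bv'_1,\dots,\bv'_{mr})=\cB$ is enumerated so that the element in position $a=lr+i$ (with $1\le i\le r$, $0\le l\le m-1$) is $\bv_i^{q^l}$. Since $\bv_i\in\Fqm^n$, we have $\bv_i^{q^m}=\bv_i$, so the coordinate-wise Frobenius $\wv\mapsto\wv^{(q)}$ acts on $\cB$ by the cyclic shift $\sigma:a\mapsto a+r\bmod mr$, sending $\bv'_a$ to $\bv'_{\sigma(a)}$. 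An inspection of the block structure in \eqref{eq: matF} confirms that $\Sm$ is the permutation matrix of $\sigma$, i.e.\ $\Sm_{a,b}=1$ iff $b=\sigma(a)$.

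Given $\Mm\in\Cmat(\cB)$, the associated coefficient vector $\cv=(c_{a,b})_{a\le b}$ satisfies $\sum_{a\le b} c_{a,b}\,\bv'_a\star\bv'_b=0$. Because the Frobenius is a ring homomorphism on $\Fqm$ that commutes coordinate-wise with the Schur product, raising this identity to the $q$-th power yields
\[
\sum_{a\le b} c_{a,b}^q\,\bv'_{\sigma(a)}\star\bv'_{\sigma(b)}=0,
\]
which, after the reindexing $a'=\sigma(a)$, $b'=\sigma(b)$, is again a valid quadratic relation among Schur products of elements of $\cB$. By the $\Fq$-linear correspondence of Definition~\ref{def: cmat_odd}, this relation corresponds to a matrix $\Mm'\in\Cmat(\cB)$ whose entries are $m'_{a',b'}=m_{\sigma^{-1}(a'),\sigma^{-1}(b')}^q$.

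Finally, a direct expansion using $\Sm_{c,a}=1\iff a=\sigma(c)$ gives
\[
(\trsp{\Sm}\Mm^{(q)}\Sm)_{a,b}=\sum_{c,d}\Sm_{c,a}\,m_{c,d}^q\,\Sm_{d,b}=m_{\sigma^{-1}(a),\sigma^{-1}(b)}^q,
\]
which matches $\Mm'$ entry-wise, so $\trsp{\Sm}\Mm^{(q)}\Sm=\Mm'\in\Cmat(\cB)$. The only delicate point in the argument is fixing the permutation-matrix convention so that the conjugation $\trsp{\Sm}(\cdot)\Sm$ produces exactly the sought reindexing; the factor-of-$2$ gap between diagonal entries of $\Mm$ and $\cv$ in odd characteristic is immaterial since the passage $\Mm\leftrightarrow\cv$ is $\Fq$-linear and therefore commutes with the Frobenius of $\Fqm$.
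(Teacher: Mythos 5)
Your proof is correct and is the natural argument: you identify $\Sm$ from \eqref{eq: matF} as the permutation matrix of the Frobenius shift $\sigma(a)=a+r \bmod mr$ on the ordered basis $\cB$ (valid since $\bv_i^{q^m}=\bv_i$), raise a quadratic relation to the $q$-th power, and verify entrywise that $(\trsp{\Sm}\Mm^{(q)}\Sm)_{a,b}=m_{\sigma^{-1}(a),\sigma^{-1}(b)}^{q}$, correctly noting that the factor $2$ on diagonal entries lies in the prime field and so commutes with the Frobenius. The paper itself states this lemma without proof, recalling it from \cite{CMT23}, and your argument is essentially the standard one behind that reference.
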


The next results characterize the structure of the kernel of a rank 2 matrix $\Am\in\Cmat(\cA)$.
\begin{lemma} \label{prop: ker_sum_GRS}
	Let $\cA,\cB$ be the two bases introduced before and $\Pm$ the change of basis, \ie $\Hm_{\cB}=\Pm \Hm_{\cA}$. Let $\Bm\in\Cmat(\cB)$ be of rank 2 and $n>3m-3$. Then $\exists\, l \in \Iintv{0}{m-1}$ such that
	\begin{equation*} \label{eq:Vperp}
		\ker(\Bm) \trsp{(\Pm^{-1})} \Pm^{-1} \Hm_{\cB} \supset \sum_{j\in\Iintv{0}{m-1}\setminus\{(l-1 \mod m),l\}} \GRS{2}{\xv}{\yv}^{q^j}
	\end{equation*}
	with probability $1-o(1)$.
\end{lemma}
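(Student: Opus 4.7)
The plan is to transport everything from the public basis $\cB$ back to the canonical basis $\cA$ via the congruence of Proposition~\ref{prop: congr_odd}, and then read off the result directly from the block structure provided by Assumption~\ref{assumption: rank2}.

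\textbf{Step 1: Translate $\Bm$ into an element of $\Cmat(\cA)$.} Since $\Hm_{\cB}=\Pm \Hm_{\cA}$, Proposition~\ref{prop: congr_odd} yields $\Cmat(\cA) = \trsp{\Pm}\Cmat(\cB)\Pm$, so $\Am \eqdef \trsp{\Pm}\Bm\Pm$ belongs to $\Cmat(\cA)$. As $\Pm$ is invertible, $\rank(\Am)=\rank(\Bm)=2$. Next, a direct manipulation shows
\[
\ker(\Bm) = \ker(\Am)\,\trsp{\Pm},
\]
and substituting $\Hm_{\cB}=\Pm\Hm_{\cA}$ simplifies the quantity of interest to
\[
\ker(\Bm)\,\trsp{(\Pm^{-1})}\,\Pm^{-1}\,\Hm_{\cB} \;=\; \ker(\Am)\,\Hm_{\cA}.
\]
Thus it suffices to prove that $\ker(\Am)\Hm_{\cA}$ contains $\sum_{j\notin\{l-1\bmod m,\,l\}} \GRS{2}{\xv}{\yv}^{q^j}$ for some $l$.

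\textbf{Step 2: Exploit the block structure of $\Am$.} Since $n>3m-3$, Assumption~\ref{assumption: rank2} applies: with probability $1-o(1)$, there exists $l'\in\Iintv{0}{m-1}$ such that all $2\times 2$ blocks $\Am_{u,v}$ vanish except those indexed by $(l',l'),(l'+1,l'),(l'+1,l'+1)$ (and their transposes), all indices taken mod $m$. In particular, any vector $\uv\in\Fqm^{2m}$ whose nonzero coordinates all lie outside the two block-rows indexed by $l'$ and $l'+1\bmod m$ automatically satisfies $\uv\Am=\mathbf{0}$, hence lies in $\ker(\Am)$. This subspace has dimension $2m-4$.

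\textbf{Step 3: Identify the image under $\Hm_{\cA}$.} By the definition of $\cA$ in~\eqref{eq: basisA}, for every $j\in\Iintv{0}{m-1}$ the two rows of $\Hm_{\cA}$ indexed by the $j$-th $2\times 2$ block are exactly $\yv^{q^j}$ and $(\xv\yv)^{q^j}$, which form a basis of $\GRS{2}{\xv}{\yv}^{q^j}$. Restricting $\ker(\Am)$ to the coordinate subspace supported outside the block-rows $l'$ and $l'+1\bmod m$, the image under multiplication by $\Hm_{\cA}$ is exactly
\[
\sum_{j\in\Iintv{0}{m-1}\setminus\{l',\,l'+1\bmod m\}} \GRS{2}{\xv}{\yv}^{q^j}.
\]
Setting $l \eqdef l'+1\bmod m$ so that $l-1\bmod m = l'$ yields the desired inclusion.

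\textbf{Main difficulty.} The argument itself is a transparent congruence-and-substitution chain once Assumption~\ref{assumption: rank2} is in force; the only delicate point is bookkeeping the cyclic shift of indices between the assumption's $l'$ (tagging the nonzero blocks) and the proposition's $l$ (tagging the excluded summands), and verifying that the probability of the required block pattern is indeed $1-o(1)$, which is precisely what the assumption grants. No further obstacle arises since the containment is only at the level of a $(2m-4)$-dimensional subspace of the $(2m-2)$-dimensional $\ker(\Am)$, and we do not need to characterize the remaining two kernel directions.
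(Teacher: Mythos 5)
Your proposal is correct and follows essentially the same route as the paper: conjugate $\Bm$ into $\Am=\trsp{\Pm}\Bm\Pm\in\Cmat(\cA)$ via Proposition~\ref{prop: congr_odd}, invoke Assumption~\ref{assumption: rank2} to confine the nonzero blocks to two consecutive block rows, observe that vectors supported outside those block rows lie in $\ker(\Am)$, and map them through $\Hm_{\cA}$ onto the sum of the remaining $m-2$ GRS codes. Your only cosmetic deviations are that you simplify the whole expression to $\ker(\Am)\Hm_{\cA}$ up front (the paper instead verifies membership codeword by codeword via $\cv=\dv\Hm_{\cA}$ and $\dv\trsp{\Pm}\in\ker(\Bm)$) and that you resolve the off-by-one index shift between the assumption and the lemma by setting $l=l'+1\bmod m$, both of which are sound.
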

\begin{proof}
	For better readability, we will assume in the following that $l\in\Iintv{1}{m-2}$, but the same arguments work for $l=0,m-1$ as well. Let $\trsp{\Pm}\Bm\Pm\in \Cmat(\cA)$. From Assumption~\ref{assumption: rank2}, with overwhelming probability, $\trsp{\Pm}\Bm\Pm$ has only 4 non-zero blocks, in particular
	\begin{equation} \label{eq: 2x2block}
		\trsp{\Pm}\Bm\Pm=\begin{bmatrix}
			\zerov_{r\times r} & & & & & \zerov_{r\times r}\\
			& \ddots & & & \iddots & \\
			& & \Am_{l-1,l-1} & \trsp{\Am_{l, l-1}} & & \\
			& & \Am_{l,l-1} & \Am_{l, l} & & \\
			& \iddots & & & \ddots & \\
			\zerov_{r\times r} & & & & & \zerov_{r\times r}\\
		\end{bmatrix}.
	\end{equation}
	Let $\cv\in \sum_{j\in\Iintv{0}{m-1}\setminus\{(l-1 \mod m),l\}} \GRS{2}{\xv}{\yv}^{q^j}$. Then $\cv= \dv\Hm_{\cA}$, where
	\[
	\dv=(\zerov_r,\dots,\zerov_r,\dv_l,\dv_{l+1},\zerov_r,\dots,\zerov_r).
	\]
	This implies that
	\[
	\dv\in\ker(\trsp{\Pm}\Bm\Pm),
	\]
	which is equivalent to
	\[
	\dv\trsp{\Pm}\in\ker(\Bm),
	\]
	since $\Pm$ is invertible. Hence, we obtain
	\[
	\cv = \dv\Hm_{\cA}=\dv (\trsp{\Pm} \trsp{(\Pm^{-1})}) \Pm^{-1} \Hm_{\cB} \in 	\ker(\Bm) \trsp{(\Pm^{-1})} \Pm^{-1} \Hm_{\cB}.
	\]
\end{proof}
\begin{lemma} \label{lemma: Vq}
	Let $\Bm\in \Cmat(\cB)$, $\VC=\ker(\Bm)$ and $\Sm$ be defined as in \eqref{eq: matF}. Then $\VC^{(q)}\Sm=\ker(\trsp{\Sm}\Bm^{(q)}\Sm).$
\end{lemma}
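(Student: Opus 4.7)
The plan is to prove the equality by double inclusion, exploiting two simple facts: $\Sm$ is invertible with $\Sm^{-1}=\trsp{\Sm}$ (it is a block permutation matrix, as already observed right after its definition in \eqref{eq: matF}), and the entry-wise Frobenius $\Mm \mapsto \Mm^{(q)}$ is multiplicative, so that in particular $\ker(\Bm^{(q)})=(\ker \Bm)^{(q)}=\VC^{(q)}$ (as sets; the fact that $\VC^{(q)}$ is again an $\Fqm$-subspace follows from the surjectivity of Frobenius on $\Fqm$). Throughout I treat kernels as left null spaces, consistently with the row-vector convention used elsewhere in the paper.

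For the inclusion $\VC^{(q)}\Sm\subseteq\ker(\trsp{\Sm}\Bm^{(q)}\Sm)$, I would pick an arbitrary $\vv\in\VC$, so that $\vv\Bm=\zerov$. Applying Frobenius component-wise yields $\vv^{(q)}\Bm^{(q)}=\zerov$. Then a direct computation gives
\[
(\vv^{(q)}\Sm)(\trsp{\Sm}\Bm^{(q)}\Sm)=\vv^{(q)}(\Sm\trsp{\Sm})\Bm^{(q)}\Sm=\vv^{(q)}\Bm^{(q)}\Sm=\zerov,
\]
so $\vv^{(q)}\Sm$ lies in $\ker(\trsp{\Sm}\Bm^{(q)}\Sm)$.

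For the reverse inclusion $\ker(\trsp{\Sm}\Bm^{(q)}\Sm)\subseteq\VC^{(q)}\Sm$, I would take any $\wv$ in the left-hand side, so that $\wv\trsp{\Sm}\Bm^{(q)}\Sm=\zerov$. Right-multiplying by $\trsp{\Sm}$ and using $\Sm\trsp{\Sm}=\Im_{rm}$ gives $\wv\trsp{\Sm}\Bm^{(q)}=\zerov$, hence $\wv\trsp{\Sm}\in\ker(\Bm^{(q)})=\VC^{(q)}$. Writing $\wv=(\wv\trsp{\Sm})\Sm$ then shows $\wv\in\VC^{(q)}\Sm$.

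No genuine obstacle is expected: the argument is purely formal, and the only subtle point worth stating explicitly is $\ker(\Bm^{(q)})=(\ker\Bm)^{(q)}$, which is an easy consequence of Frobenius being a field automorphism applied entry-wise. If desired, one could shorten the proof further by just checking dimensions once one inclusion is known, since both $\VC^{(q)}\Sm$ and $\ker(\trsp{\Sm}\Bm^{(q)}\Sm)$ have $\Fqm$-dimension equal to $\dim_{\Fqm}\VC$ (the Frobenius and the invertible $\Sm$ both preserve dimension).
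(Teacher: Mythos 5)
Your proof is correct and takes essentially the same route as the paper: the paper's proof is the same formal computation, compressed into a single chain of equivalences $0=\vv\Bm \iff 0=(\vv\Bm)^q\Sm=\vv^q\Bm^{(q)}\Sm=(\vv^q\Sm)\cdot\trsp{\Sm}\Bm^{(q)}\Sm$, which handles both of your inclusions at once using exactly the two facts you isolate ($\Sm^{-1}=\trsp{\Sm}$ and the Frobenius being an entry-wise field automorphism). Your explicit verification of $\ker(\Bm^{(q)})=(\ker\Bm)^{(q)}$ and of the $\Fqm$-linearity of $\VC^{(q)}$ merely spells out what the paper leaves implicit.
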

\begin{proof}
	This readily follows from the fact that $\Sm$ is invertible with inverse $\trsp{\Sm}$. Indeed, $\forall \vv\in \VC$,
	\begin{align*}
		&0=\vv \Bm\\
		\iff & 0=(\vv \Bm)^q \Sm= \vv^q \Bm^{(q)}\Sm= (\vv^q \Sm)\cdot \trsp{\Sm}\Bm^{(q)}\Sm.
	\end{align*}
\end{proof}

Let $\Bm_1$ and $\Bm_2$ be the two matrices sampled at line 3 in Algorithm~\ref{alg: attack} and consider the case where $\trsp{\Pm}\Bm_1\Pm$ and $\trsp{\Pm}\Bm_2\Pm$ are as in Equation~\eqref{eq: 2x2block} for the same indices $l-1 \mod m$ and $l$. This will happen with probability $1/m$. Then, since $\dim_{\Fqm} \ker(\Bm_1)=\dim_{\Fqm} \ker(\Bm_2)=rm-2$ and $\Bm_1$ and $\Bm_2$ are independent, we expect that $\dim_{\Fqm} \ker(\Bm_1)\cap \ker(\Bm_2)=rm-4$ with high probability. If this is the case, the fact that 
\[\dim_{\Fqm} \sum_{j\in\Iintv{0}{m-1}\setminus\{(l-1 \mod m),l\}} \GRS{2}{\xv}{\yv}^{q^j}=rm-4\]
in conjunction with Proposition~\ref{prop: ker_sum_GRS} implies that
\begin{equation} \label{eq: V=sumGRS}
	\VC  \trsp{(\Pm^{-1})} \Pm^{-1} \Hm_{\cB} = \sum_{j\in\Iintv{0}{m-1}\setminus\{(l-1 \mod m),l\}} \GRS{2}{\xv}{\yv}^{q^j},
\end{equation}
where 
\[
\VC \eqdef (\ker(\Bm_1) \cap \ker(\Bm_2)).
\]
We are finally ready to prove Proposition~\ref{prop: correctness} under Assumption~\ref{assumption: rank2}, thus showing how the space $\VC$ unveils a basis for a single GRS code.

\begin{proof}[Proof of Proposition~\ref{prop: correctness}.] From Assumption~\ref{assumption: rank2} and the argument shown above about matrices $\Bm_1$ and $\Bm_2$, we have that Equation~\eqref{eq: V=sumGRS} holds with probability at least $\frac{1}{m}(1-o(1))$. In this case, by Lemmata~\ref{prop: ker_sum_GRS} and \ref{lemma: Vq}, we deduce that
	\[
	\VC^{(q)}\Sm  \trsp{(\Pm^{-1})} \Pm^{-1} \Hm_{\cB}=\sum_{j\in\Iintv{0}{m-1}\setminus\{l,(l+1 \mod m)\}} \GRS{2}{\xv}{\yv}^{q^j}.
	\]
	Under the usual assumption that all the $m$ GRS codes have trivial intersection, we obtain
	\[
	(\VC+\VC^{(q)}\Sm) \trsp{(\Pm^{-1})} \Pm^{-1} \Hm_{\cB}=\sum_{j\in\Iintv{0}{m-1}\setminus\{l\}} \GRS{2}{\xv}{\yv}^{q^j}.
	\]
	
	Let us now pick $\vv^\perp \in (\VC+\VC^{(q)}\Sm)^\perp$. For any $\vv \in \VC+\VC^{(q)}\Sm$, we can write
	\[
	0= \langle \vv, \vv^\perp \rangle 
	=\langle \vv \Im_{rm}, \vv^\perp \rangle 
	=\langle \vv (\trsp{\Pm})^{-1} \trsp{\Pm}, \vv^\perp \rangle 
	=  \langle \vv\trsp{(\Pm^{-1})} , \vv^\perp \Pm \rangle .
	\]
	Therefore $\vv^\perp \Pm$ is zero outside the $j$-th block. Hence
	\[
	(\VC+\VC^{(q)}\Sm)^\perp \Hm_{\cB}= ((\VC+\VC^{(q)}\Sm)^\perp \Pm) \Hm_{\cA} \subseteq  \GRS{2}{\xv}{\yv}^{q^l},
	\]
	and since $\dim_{\Fqm}((\VC+\VC^{(q)}\Sm)^\perp)=rm-\dim_{\Fqm}(\VC+\VC^{(q)}\Sm)=2$, we get
	\[
	(\VC+\VC^{(q)}\Sm)^\perp \Hm_{\cB}=  \GRS{2}{\xv}{\yv}^{q^l}.
	\]
\end{proof}
At this stage, it is enough to apply the Sidelnikov-Shestakov attack \cite{SS92} on $\GRS{2}{\xv}{\yv}^{q^l}=(\VC+\VC^{(q)}\Sm)^\perp \Hm_{\cB}$. The support-multiplier pair output by this procedure is also a valid support-multiplier pair for the Goppa code $\Goppa{\xv}{\Gamma}=\Alt{2}{\xv}{\yv}$.

\begin{remark}
	It is easy to see that a slightly refined version of Algorithm~\ref{alg: attack} guarantees its termination, under Assumption~\ref{assumption: rank2}. Even if the sampled matrices $\Bm_1$ and $\Bm_2$ do not lead to the same shape of Equation~\eqref{eq: 2x2block}, there must be $l\in \Iintv{0}{m-1}$ such that $\Bm_1$ and $(\trsp{\Sm})^l\Bm_2^{(q^l)}\Sm^l$ do. Therefore, at most $m$ iterations are needed in order to get the GRS code.
\end{remark}

We conclude the section by giving the parameters of various TII challenges broken by this attack in Table~\ref{table: TII}. We run experiments in MAGMA on an Intel Core i7-1355U Processor. The variance of the attack timing computed on several resolutions is pretty high, due to the randomicity in the variable specialization, but on average all instances such that $n > 3rm-3$ take less than 10 seconds. When instead $n \le 3rm-3$, Assumption~\ref{assumption: rank2} does not hold anymore and rank-2 matrices other than those described in Proposition~\ref{prop: variety} are expected to belong to the matrix code. However, the attack is still supposed to work if the good matrices are sampled. Therefore, if $3m-3-n$ is a small natural number, the attack may still be practical, despite not having polynomial-time complexity anymore. Table~\ref{table: TII} provides a couple of such examples. 
\begin{table} 
\begin{center}
	\begin{tabular}{|c| c| c| c|c| c| c|} 
		\hline
		$q$ & $r$ & claimed bit complexity $\lambda$ & $m$ & $n$ & $n > 3rm-3$ ? & average time attack \\ 
		\hline\hline
		\multirow{10}{*}{2} &  \multirow{10}{*}{2} & 22 & 5 & 32 & yes & $<$3s\\  

		& & 39 & 5 & 28 & yes & $<$3s\\
		& & 41 & 5 & 27 & no (equal) & $\sim$10s\\
		& & 43 & 5 & 26 & no & $<$1min\\
		& & 44 & 6 & 61 & yes & $<$10s\\	
		& & 48 & 6 & 60 & yes & $<$10s\\
		& & 58 & 6 & 57 & yes & $<$10s\\
		& & 63 & 6 & 55 & yes & $<$10s\\	
		& & 65 & 6 & 54 & yes & $<$10s\\	
		& & 68 & 6 & 53 & yes & $<$10s \\ 	\hline
	\end{tabular}
\end{center}\caption{TII challenges with Goppa polynomial degree 2}  \label{table: TII}
\end{table}

\section{Conclusions}

In this paper, we analyzed in detail the matrix code of quadratic relationships, introduced in \cite{CMT23}, originated by a Goppa code. We described and categorized structured matrices with low rank, relating them to polynomial identities.

We extended the approach used in \cite{CMT23} to break instances of binary Goppa codes of degree 2, thus solving two TII challenges in a matter of a few seconds. To this aim, we first studied the variety associated with the Pfaffian ideal, proving that its dimension is at least 3. Then, from solutions of the Pfaffian system obtained by specializing 3 variables, we devised an efficient algorithm to reconstruct a GRS code that builds upon the strategy of \cite{CMT23}. This demonstrates the effectiveness of the Pfaffian modeling not only for distinguishing purposes but also for mounting key-recovery attacks.

Finally, we introduced the notion of Goppa code representation for Goppa codes and provided a procedure to get one of them from a generic pair of support and multiplier.






\section*{Acknowledgments}
The authors would like to thank Jean-Pierre Tillich and the anonymous reviewers for their careful reading of the manuscript and for their valuable comments and suggestions, which helped to reorganize the material and significantly improve its quality.









\newcommand{\etalchar}[1]{$^{#1}$}

\end{document}